\pdfoutput=1
\documentclass[12pt]{article}

\usepackage{lmodern}

\DeclareFontFamily{T1}{calligra}{}
\DeclareFontShape{T1}{calligra}{m}{n}{<->s*[1.44]callig15}{}
\DeclareMathAlphabet\mathcalligra   {T1}{calligra} {m} {n}
\DeclareMathAlphabet\mathzapf       {T1}{pzc} {mb} {it}
\DeclareMathAlphabet\mathchorus     {T1}{qzc} {m} {n}
\DeclareMathAlphabet\mathrsfso      {U}{rsfso}{m}{n}
\DeclareMathAlphabet\mathfrcal      {T1}{frcursive}{m}{it}
\DeclareFontFamily{T1}{frcursive}{}
\DeclareFontShape{T1}{frcursive}{m}{n}{<->s*[1.44]callig15}{}
\DeclareMathAlphabet\mathfrcal      {T1}{frcursive}{m}{it}

\usepackage{amsmath}
\usepackage{amssymb}
\usepackage{amsthm}
\usepackage{graphicx}
\usepackage[dvipsnames]{xcolor}
\usepackage[english]{babel}
\usepackage{csquotes}
\numberwithin{equation}{section}
\usepackage{array}
\usepackage{mathtools}
\usepackage{dsfont}
\usepackage{mathrsfs}
\usepackage{tikz}\usetikzlibrary{matrix,fit}
\usepackage{varwidth}
\usepackage{enumerate}
 \usepackage{appendix}
\usepackage{ytableau}
\usepackage{simpler-wick}
\usepackage{dsfont}
\usepackage{slashbox}

\usepackage{caption}

\usepackage[compat=1.1.0]{tikz-feynman}

\usepackage[margin=1in]{geometry}
\usepackage{nicematrix}

\usepackage[
    backend=bibtex,
    style=alphabetic,
maxbibnames=99,
giveninits=true,
minalphanames=1,
maxalphanames=3,url=false
  ]{biblatex}
\addbibresource{GeodesicsAndSpectra.bib}

\usepackage{mathabx}
\usepackage{empheq}

\usepackage{tabularx}

\setlength{\fboxsep}{1em}

\setcounter{tocdepth}{2}

\usepackage{setspace}

\usepackage{fontawesome5}

\usepackage{slashed}
\usepackage{upgreek}

\usepackage{tabstackengine}

\usepackage{wrapfig}
\usepackage[abs]{overpic}

\usepackage{float}

\usepackage{mathtools}

\DeclarePairedDelimiter\ket{\lvert}{\rangle}

\fixTABwidth{T}

\textheight=21.5cm \textwidth=15.5cm \oddsidemargin=0.5cm
\topmargin=-0.5cm

\usepackage{accents}

\newcommand{\CP}{\mathds{CP}}

\newcommand{\bea}{\begin{equation}}
\newcommand{\eea}{\end{equation}}
\newcommand{\bear}{\begin{eqnarray}}
\newcommand{\eear}{\end{eqnarray}}
\newcommand{\bearr}{\begin{eqnarray*}}
\newcommand{\eearr}{\end{eqnarray*}}

\newtheorem{prop}{Proposition}
\newtheorem{lem}{Lemma}

\newdimen\mytextwidth
\newcommand\rem[2][cyan!40!green]{\noindent\nobreak\hfil\penalty1000\hfilneg
\mytextwidth=\linewidth\advance\mytextwidth by 2mm
\begin{tikzpicture}[baseline=-\the\dimexpr\fontdimen22\textfont2\relax]\node[outer sep=0pt,draw=black,fill=#1,fill opacity=1,text opacity=1,rectangle,rounded corners]{\begin{varwidth}{\mytextwidth}\textcolor{white}{#2}\end{varwidth}};
\end{tikzpicture}\allowbreak
}

\newcommand\whiterem[2][white!]{\noindent\nobreak\hfil\penalty1000\hfilneg
\mytextwidth=\linewidth\advance\mytextwidth by 2mm
\begin{tikzpicture}[baseline=-\the\dimexpr\fontdimen22\textfont2\relax]\node[outer sep=0pt,draw=black,fill=#1,fill opacity=1,text opacity=1,rectangle,rounded corners,line width=1.5pt]{\begin{varwidth}{\mytextwidth}\textcolor{black}{#2}\end{varwidth}};
\end{tikzpicture}\allowbreak
}

\renewbibmacro{in:}{}

\usepackage{mdframed}

\ExecuteBibliographyOptions{doi=false}
\ExecuteBibliographyOptions{isbn=false}

\newbibmacro{string+doiurlisbn}[1]{%
  \iffieldundef{doi}{%
    \iffieldundef{url}{%
      \iffieldundef{isbn}{%
        \iffieldundef{issn}{%
          #1%
        }{%
          \href{http://books.google.com/books?vid=ISSN\thefield{issn}}{#1}%
        }%
      }{%
        \href{http://books.google.com/books?vid=ISBN\thefield{isbn}}{#1}%
      }%
    }{%
      \href{\thefield{url}}{#1}%
    }%
  }{%
    \href{http://dx.doi.org/\thefield{doi}}{#1}%
  }%
}

\DeclareFieldFormat{title}{\usebibmacro{string+doiurlisbn}{\mkbibemph{#1}}}
\DeclareFieldFormat[article,incollection]{title}%
    {\usebibmacro{string+doiurlisbn}{\mkbibquote{#1}}}

\setlength{\fboxsep}{0.3cm}

\newmdenv[
  topline=false,
  bottomline=false,
  rightline=false,
  linewidth=2pt,
  skipabove=\topsep,
  skipbelow=\topsep
]{siderules}

\newmdenv[
  topline=false,
  bottomline=false,
  linewidth=2pt,
  skipabove=\topsep,
  skipbelow=\topsep
]{siderulesright}

\makeatletter
\renewcommand{\@seccntformat}[1]{\csname the#1\endcsname.\quad}
\makeatother

\usepackage{setspace}
\onehalfspacing

\usepackage{xpatch}

\makeatletter
\renewcommand{\@chap@pppage}{
  \clear@ppage
  \thispagestyle{plain}
  \if@twocolumn\onecolumn\@tempswatrue\else\@tempswafalse\fi
  \null\vfil
  \markboth{}{}
  {\centering
   \interlinepenalty \@M
   \normalfont
   \MakeUppercase \appendixpagename\par}
  \if@dotoc@pp
    \addappheadtotoc
  \fi
  \vfil\newpage
  \if@twoside
    \if@openright
      \null
      \thispagestyle{empty}
      \newpage
    \fi
  \fi
  \if@tempswa
    \twocolumn
  \fi
}
\makeatother

\definecolor{navycol}{RGB}{100,150,160}
   \definecolor{pinkcol}{RGB}{242,55,55}
   \definecolor{greencol}{RGB}{50,205,50}

   \definecolor{bluecol}{RGB}{30,144,255}

\usepackage{tocloft}

\usepackage{titlesec}

\titleformat*{\section}{\large\bfseries}
\titleformat*{\subsection}{\normalsize\bfseries}
\titleformat*{\subsubsection}{\normalsize\bfseries}
\titleformat*{\paragraph}{\large\bfseries}
\titleformat*{\subparagraph}{\large\bfseries}
\titlespacing{\author}{-5pt}{-5pt}{-5pt}[-5pt]

\makeatletter
\renewcommand\subsubsection{\@startsection{subsubsection}{3}{\z@}
                                     {-3.25ex\@plus -1ex \@minus -.2ex}
                                     {-1.5ex \@plus -.2ex}
                                     {\normalfont\normalsize\bfseries}}
\renewcommand\subsection{\@startsection{subsection}{3}{\z@}
                                     {-3.25ex\@plus -1ex \@minus -.2ex}
                                     {-1.5ex \@plus -.2ex}
                                     {\normalfont\normalsize\bfseries}}                                     
\makeatother

\setlength{\columnsep}{20pt}

\usepackage{soul}

\usepackage{scalerel}[2016/12/29]

\DeclareFontFamily{U}{solomos}{}
\DeclareErrorFont{U}{solomos}{m}{n}{10}
\DeclareFontShape{U}{solomos}{m}{n}{
  <-> s*[1.1]  gsolomos8r
}{}

   \interfootnotelinepenalty=10000
   
   \usepackage{stmaryrd}

\usepackage{tikz}
\usetikzlibrary{arrows.meta}

\usepackage{indentfirst}

\let \savenumberline \numberline
\def \numberline#1{\savenumberline{#1.}}

\usepackage{xurl}
\usepackage{scrextend}

\sloppy

\usepackage{upgreek}

\usepackage{etoolbox}
\patchcmd{\tableofcontents}{\@starttoc}{\vspace{-0.3cm}\@starttoc}{}{}

\usepackage{nicematrix}

\usepackage{array}

\setcounter{MaxMatrixCols}{20}

\sloppy

\setcounter{tocdepth}{1}

\newcounter{Chapcounter}

\newcommand{\chapter}[1] 
{ {\centering          
  \addtocounter{Chapcounter}{1} \Large \underline{\sffamily \texorpdfstring{\textbf{  Chapter \theChapcounter: ~#1}}{Lg}} }   
  \addcontentsline{toc}{section}{ \color{blue} \texorpdfstring{Chapter ~}{Lg}\theChapcounter.\texorpdfstring{~~}{Lg} #1 }    
}

\newcommand{\appendixbig}[1] 
{ {\centering          
   \Large \underline{\sffamily \textbf{  Appendices}} }   
  \addcontentsline{toc}{section}{ \color{blue} Appendices}    
}

\usepackage[hyperfootnotes=]{hyperref}
\hypersetup{
    colorlinks=true,
    linkcolor=blue,
    filecolor=magenta,      
    urlcolor=cyan,
    breaklinks=true
    }

\sloppy

\title{\textbf{Sigma models from Gaudin spin chains} \vspace{0.7cm}}

\author{Dmitri Bykov$^{\,a,\,b,\,c,\,d,\,e}$\footnote{Emails:
 bykov@mi-ras.ru, dmitri.v.bykov@gmail.com} \qquad\qquad Andrew Kuzovchikov$^{\,a,\,b,\,c,\,d}$\footnote{Email:
 andrkuzovchikov@mail.ru}
\\  \vspace{0cm}  \\
{\small $a)$ 
\emph{Steklov
Mathematical Institute of Russian Academy of Sciences,}} \\{\small \emph{Gubkina str. 8, 119991 Moscow, Russia} }\\
{\small $b)$ 
\emph{Institute for Theoretical and Mathematical Physics,}} \\{\small \emph{Lomonosov Moscow State University, 119991 Moscow, Russia}}\\
{\small $c)$ \emph{HSE University, 6 Usacheva str., Moscow 119048, Russia}}\\
{\small $d)$ \emph{Moscow Center of Fundamental and Applied Mathematics,}} \\{\small \emph{ Lomonosov Moscow State University, 119991 Moscow, Russia}}\\
{\small $e)$ \emph{Beijing Institute of Mathematical Sciences and Applications (BIMSA),}} \\{\small \emph{Huairou District, Beijing
101408, China}}
}

\date{}

\begin{document}

\maketitle

\ytableausetup{centertableaux}

\begin{abstract}
We solve the classical and quantum problems for the 1D sigma model with target space the flag manifold $\mathrm{U}(3)\over \mathrm{U}(1)^3$, equipped with the most general invariant metric. In particular, we explicitly describe all geodesics in terms of elliptic functions and demonstrate that the spectrum of the Laplace-Beltrami operator may be found by solving polynomial (Bethe) equations. The main technical tool that we use is a mapping between the sigma model and a  Gaudin model, which is also shown to hold in the $\mathrm{U}(n)$ case. 
\end{abstract}

\newpage
\section{Introduction}

Dualities between different systems play an important role in modern mathematical physics. With their help, it is possible to apply methods from one area in another one. In the present work we will  investigate  the duality between spin chains and one-dimensional sigma models on (co)adjoint orbits of $\mathrm{SU}(N)$ -- the flag manifolds \cite{Bykov_2024,OscCalculusCoadj,IsotropicBykov}. A precursor of a duality of this type arose in the physics-oriented work on the continuum limit of $\mathrm{SU}(2)$ spin chains that leads to the two-dimensional sigma model on $\mathrm{S}^2$~\cite{HaldaneNonlin}. Later, this was generalized to $\mathrm{SU}(N)$ spin chains and two-dimensional sigma models on flag manifolds in~\cite{BykLagEmb,Bykov_2013, Affleck1, Affleck2}. However, the results that we will report in this paper refer to the \emph{one-dimensional} case, where the mapping becomes fully rigorous.

Let us take a closer look at the one-dimensional case. At the classical level the `sigma model' is tantamount to the study of the geodesic flow on the relevant manifold, while quantization leads to the spectral problem for a Laplace-type operator. In the present article, we will use the duality between sigma models on flag manifolds and spin chains to solve both problems. Our main example is $\mathcal{F}(3)= \frac{\mathrm{U}(3)}{\mathrm{U}(1)^3}$ -- the Wallach flag manifold of~$\mathrm{SU}(3)$~\cite{WallachFlag}. 

Geodesic flow on homogeneous spaces is of special interest in modern differential geometry~\cite{Bolsinov}. Solution to geodesic flow equations was explicitly found for a special class of Riemannian metrics on homogeneous spaces of compact Lie groups~\cite{Souris_2023}. As regards the flag manifolds, one should mention that geodesic flow on $\mathcal{F}(3)$ has been proven to be integrable for all invariant metrics~\cite{Paternain}, using a variation of the method proposed in~\cite{Thimm}. However, explicit expressions for the geodesics were not provided. We will fill in this gap in the first part of this paper using the relation to spin chains as well as direct methods\footnote{In this sense, we follow the idea of~\cite{Perelomov_2005}.}.

Much less is known about the spectrum of the Laplace-Beltrami operator. The list of manifolds for which the spectrum has been calculated includes spheres, projective spaces \cite{berger1971spectre, IKEDA} (including the magnetic case in~\cite{Kuwabara}), flat tori \cite{berger1971spectre}, Riemannian symmetric spaces \cite{Wolf1976} and naturally reductive homogeneous spaces~\cite{agricola2025}. A separate body of work concerns the study of first eigenvalues of the Laplace-Beltrami operator on various manifolds~ \cite{Lauret_2018,Bettiol_2022,LingLu,URAKAWA_1979}. For generalized flag manifolds in particular, the first eigenvalues were studied in \cite{Doi_1987}. The spectrum has been calculated for the normal metric on flag manifolds in \cite{Yamaguchi} and for submersion metrics associated with forgetful bundles in \cite{Bykov_2024}. In light of the integrability of geodesic flow equations on $\mathcal{F}(3)$ and the relation of this classical problem to the quantum spectral problem\footnote{See \cite{Camporesi} for details.}, one should expect that the spectrum of the Laplace-Beltrami operator can be calculated for an arbitrary invariant metric on~ $\mathcal{F}(3)$. Indeed, the respective calculation is the second main result of this paper. 

From a more practical standpoint we should mention that the flag manifold $\mathcal{F}(3)$ is relevant for $AdS_4$ compactifications of Type IIA supergravity, cf.~\cite{Tomasiello, KoerberLustTsimpis}. The same flag manifold also features in flat space compactifications of superstring models~\cite{Lust, CastellaniLust} (see also~\cite{Klaput, Matti, Zoupanos} and references therein for more recent work in this direction). In this context the spectrum of the Laplacian is particularly relevant for estimating the masses of the fields corresponding to Kaluza-Klein harmonics.  Another interesting observation is that the class of metrics that we study  is closed w.r.t. Ricci flow, as discussed in~\cite{Bakas}. In particular, Ricci flow for flag manifolds has been described rather explicitly  in~\cite{Grama2009, Grama2020}.

The paper is organized as follows. We start in Section~\ref{flagsec} by reviewing the Lagrangian embedding of an arbitrary flag manifold into a product of Grassmannians, on which the results of the paper are ultimately based. In Section~\ref{SUNSpinChain} we explain the relation between a Gaudin-type model and a 1D sigma model with flag manifold target space. We then proceed in Section~\ref{geodSection} to explicitly solve the geodesic equations on the flag manifold $\mathcal{F}(3)$ in terms of elliptic functions. In the subsequent Section~\ref{F3specsec} we solve the corresponding quantum problem, i.e. we determine the spectrum of the Laplace operator on $\mathcal{F}(3)$ using Bether ansatz for the Gaudin model. In particular, we find that the spectrum is essentially encoded in the Heun polynomials. Finally, in Section~\ref{spectrRecon} we develop an alternative method (called `spectral reconstruction') that allows calculating some eigenvalues without the use of Bethe ansatz. These independent results are used to compare with the general findings of Section~\ref{F3specsec}.

\section{Flag manifolds}\label{flagsec}
In this section, we will review the basic definitions and geometric constructions, concerning flag manifolds. A flag is an ordered set of mutually orthogonal linear subspaces in $\mathbb{C}^N$ 
\begin{align}\label{Lspaces}
    \{\mathsf{L}_i\}_{i=1}^k\;,\quad\text{such  that}\;\, \mathbb{C}^N = \bigoplus_{i=1}^k \mathsf{L}_i\
\end{align}
with prescribed dimensions $n_i = \mathrm{dim}\; \mathsf{L}_i$. The set of all flags for fixed $n_i$'s is known as the flag manifold. It is a homogeneous space\footnote{We refer to \cite{Affleck_2022} and \cite{ArvGeometryOfFlags} for details on the geometry of flag manifolds.} of both  $\mathrm{GL}(N,\mathbb{C})$ and $\mathrm{SU}(N)$. More precisely, it is a (co)adjoint orbit of $\mathrm{SU}(N)$ and, as a quotient space, it has the form
\begin{align}
    \mathcal{F}_{n_1,n_2,\dots,n_k} = \frac{\mathrm{SU}(N)}{\mathrm{S}\left(\mathrm{U}(n_1)\times \mathrm{U}(n_2)\times \dots \times \mathrm{U}(n_k)\right)}\,,
\end{align}
where $n_1+n_2+\dots+n_k=N$ and $\mathrm{S}\left(\dots\right)$ stands for a subgroup of matrices with determinant $1$. We define the full flag manifold as
\bea
\mathcal{F}(N) := \mathcal{F}_{1,1,\dots,1}
\eea 
All other flag manifolds are called partial. Some remarkable examples of partial flag manifolds are the Grassmannians $\mathrm{Gr}(n,N):=\mathcal{F}_{n,N-n}$ and projective spaces $\mathbb{CP}^{N-1}:=\mathcal{F}_{1,N-1}$. They can be described only via the single linear subspace $\mathsf{L}_1$ in~(\ref{Lspaces}) because the second subspace $\mathsf{L}_2 = \mathsf{L}_1^{\perp}$, where $\perp$ stands for an orthogonal complement in $\mathbb{C}^N$. Thus, to describe a point in $\mathrm{Gr}(n,N)$, one can use an $N\times n$ matrix $Z$ of maximal rank. The columns of $Z$ form a basis in $\mathsf{L}_1$. Naturally, this matrix is defined up to an $\mathrm{GL}(n,\mathbb{C})$ transformation. Note that if one requires the columns of $Z$ to be orthogonal and have unit norm, then the redundancy group reduces to $\mathrm{U}(n)$.

As a (co)adjoint orbit, every flag manifold is endowed with the natural symplectic form -- the Kirillov-Kostant-Souriau form \cite{Kirillov_2004}. In the case of projective spaces this is the well-known Fubini-Study form.

From the definition of the flag manifold, one finds that there is a natural embedding 
\begin{align}
    \mathcal{F}_{n_1,n_2\dots,n_k}\hookrightarrow \mathrm{Gr}(n_1,N)\times\mathrm{Gr}(n_2,N)\times\dots\times\mathrm{Gr}(n_k,N)\,,\label{embedFlag}
\end{align}
where one maps a flag $\{\mathsf{L}_i\}_{i=1}^k$ in $\mathcal{F}_{n_1,n_2\dots,n_k}$ into a set $\{\mathsf{L}_i, \mathsf{L}_i^{\perp}\}_{i=1}^k$. Every pair $\{\mathsf{L}_i, \mathsf{L}_i^{\perp}\}$ is thought of as a point in $\mathrm{Gr}(n_i,N)$. It is remarkable that the embedding is \emph{Lagrangian} when the symplectic form on the RHS of (\ref{embedFlag}) is chosen to be a sum of (generalized) Fubini-Study forms on each Grassmannian. So, we arrive at the following statement:
 
\begin{prop}[\cite{BykLagEmb}]\label{lagrtheorem}
       $\mathcal{F}_{n_1,\dots,n_k}$ is a Lagrangian submanifold of $(\mathrm{Gr}(n_1,N)\times\dots\times\mathrm{Gr}(n_k,N), \Omega)$, where the symplectic form $\Omega$ is a sum of Fubini-Study forms and the submanifold is distinguished by the orthogonality condition on the planes corresponding to points of the Grassmannians.
\end{prop}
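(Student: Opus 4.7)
The plan is to verify two independent conditions: that the real dimension of $\mathcal{F}_{n_1,\ldots,n_k}$ is exactly half that of $\prod_{i=1}^{k}\mathrm{Gr}(n_i,N)$, and that the pullback of $\Omega$ to the flag manifold vanishes identically. The dimension count is immediate from $N=\sum_i n_i$: one has $\dim_{\mathbb{R}}\mathcal{F}_{n_1,\ldots,n_k}=N^2-\sum_i n_i^2=2\sum_{i<j}n_in_j$ and $\dim_{\mathbb{R}}\prod_i\mathrm{Gr}(n_i,N)=2\sum_i n_i(N-n_i)=4\sum_{i<j}n_in_j$, so the half-dimension condition holds.

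For the vanishing of the pullback, I would parametrize $T_{\{\mathsf{L}_i\}}\mathcal{F}\simeq\mathfrak{u}(N)/\bigoplus_i\mathfrak{u}(n_i)$ by skew-Hermitian matrices $X$ with vanishing block-diagonal components $X_{ii}:\mathsf{L}_i\to\mathsf{L}_i$; the remaining blocks $X_{ji}:\mathsf{L}_i\to\mathsf{L}_j$ are constrained by $X_{ij}=-X_{ji}^{\dagger}$. The tangent vector induced at $\mathsf{L}_i\in\mathrm{Gr}(n_i,N)$ is $A_i=\sum_{j\neq i}X_{ji}\in\mathrm{Hom}(\mathsf{L}_i,\mathsf{L}_i^{\perp})$. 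Writing the Fubini-Study form as $\omega_i(A_i,B_i)\propto\tr_{\mathsf{L}_i}(A_i^{\dagger}B_i)-\tr_{\mathsf{L}_i}(B_i^{\dagger}A_i)$ and inserting the block expressions (with $X_{ji}^{\dagger}=-X_{ij}$) yields
\begin{equation*}
\omega_i(A,B) \;\propto\; -\sum_{j\neq i}\bigl[\tr_{\mathsf{L}_i}(X_{ij}Y_{ji})-\tr_{\mathsf{L}_i}(Y_{ij}X_{ji})\bigr].
\end{equation*}

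The key identity is the cyclic property of the trace, $\tr_{\mathsf{L}_i}(Y_{ij}X_{ji})=\tr_{\mathsf{L}_j}(X_{ji}Y_{ij})$: applying this and relabelling $i\leftrightarrow j$ in the double sum $\sum_i\omega_i$ collapses the result to $\sum_{i\neq j}(c_i-c_j)\tr_{\mathsf{L}_i}(X_{ij}Y_{ji})$, where $c_i$ are the weights of the individual Fubini-Study summands. With equal weights this vanishes identically, so $\Omega|_{\mathcal{F}}=0$. Together with the dimension count this gives the Lagrangian property.

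The only real obstacle is the index bookkeeping in the above identity; once the block structure of $T_p\mathcal{F}$ is set up, the cancellation is automatic. A more conceptual alternative would be to show that $\mathcal{F}_{n_1,\ldots,n_k}$ is a maximal totally real submanifold of the K\"ahler manifold $\prod_i\mathrm{Gr}(n_i,N)$: the relation $\mathfrak{u}(N)\cap i\mathfrak{u}(N)=\{0\}$ implies $T_p\mathcal{F}\cap J\,T_p\mathcal{F}=\{0\}$, and the dimension count then forces $T_p\prod\mathrm{Gr}=T_p\mathcal{F}\oplus J\,T_p\mathcal{F}$. The $\mathrm{U}(N)$-invariance of the product K\"ahler metric, together with transitivity of the diagonal action on $\mathcal{F}$, ensures that this decomposition is orthogonal, so that $\Omega=g(J\cdot,\cdot)$ vanishes on $T\mathcal{F}$. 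I would nonetheless prefer the direct trace computation, as it is self-contained and makes explicit where the common normalization of the summands in $\Omega$ enters.
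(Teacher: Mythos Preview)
The paper does not actually prove this proposition; it is stated as a cited result from~\cite{BykLagEmb}, so there is no argument to compare against. Your direct trace computation is correct and complete: the dimension count is right, and the block decomposition plus trace cyclicity gives exactly the cancellation you describe, with the equal-weight condition on the Fubini--Study summands entering precisely where you indicate.

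Your alternative ``totally real'' argument, however, has a circular step. You correctly establish that $T_p\mathcal{F}\cap J\,T_p\mathcal{F}=\{0\}$, so the flag manifold is totally real of maximal dimension. But the claim that $\mathrm{U}(N)$-invariance of the metric together with transitivity on $\mathcal{F}$ forces the splitting $T_p\mathcal{F}\oplus J\,T_p\mathcal{F}$ to be $g$-orthogonal does not follow: for $X,Y\in T_p\mathcal{F}$ one has $g(X,JY)=\omega(Y,X)$, so orthogonality of the two summands is \emph{equivalent} to $\Omega|_{T\mathcal{F}}=0$, not an independent consequence of symmetry. (The isotropy $\prod_i\mathrm{U}(n_i)$ acts by the same representation on $T_p\mathcal{F}$ and on $J\,T_p\mathcal{F}$, so Schur-type arguments do not separate them.) Since you explicitly prefer the direct computation anyway, this does not affect your main proof, but the alternative route as written would need a different justification for that orthogonality.
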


In fact, there is an even stronger statement:

\begin{prop}[\cite{Bykov_2024}]\label{subbundle}
       Let $\mathcal{X}\subset \mathrm{Gr}(n_1,N)\times\dots\times\mathrm{Gr}(n_k,N)$ be the set of all $k$-tuples of linear subspaces in~$\mathbb{C}^N$ such that their direct sum is $\mathbb{C}^N$. Then $\mathcal{X}$ is symplectomorphic to an open subset of $\mathrm{T}^\ast \mathcal{F}_{n_1,\dots,n_k}$.
\end{prop}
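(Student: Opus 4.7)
The plan is to combine Weinstein's Lagrangian tubular neighborhood theorem with an explicit group-theoretic construction. First I would verify that $\mathcal{X}$ is an open subset of the product of Grassmannians: given frames $Z_i$ with columns spanning $\mathsf{L}_i$, the direct sum condition is equivalent to $\det[Z_1 \,|\, Z_2 \,|\, \dots \,|\, Z_k] \neq 0$, which is manifestly open. Combined with Proposition~\ref{lagrtheorem}, this shows that $\mathcal{F}_{n_1,\dots,n_k}$ sits inside $\mathcal{X}$ as a Lagrangian submanifold with respect to $\Omega$.

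Next I would invoke Weinstein's theorem to obtain a symplectomorphism between some open neighborhood of $\mathcal{F}_{n_1,\dots,n_k}$ in $\mathcal{X}$ and an open neighborhood of the zero section in $\mathrm{T}^\ast \mathcal{F}_{n_1,\dots,n_k}$. To identify the maximal such neighborhood with all of $\mathcal{X}$, the natural tool is the $\mathrm{GL}(N,\mathbb{C})$-action: $\mathcal{X}$ is a single $\mathrm{GL}(N,\mathbb{C})$-orbit (the unique open/generic orbit), while $\mathcal{F}_{n_1,\dots,n_k}$ is the single compact $\mathrm{U}(N)$-orbit sitting inside. Via the Cartan decomposition $\mathrm{GL}(N,\mathbb{C}) = \mathrm{U}(N)\cdot\exp(\mathrm{i}\,\mathfrak{u}(N))$ one obtains a $\mathrm{U}(N)$-equivariant retraction $\mathcal{X} \to \mathcal{F}_{n_1,\dots,n_k}$, whose fibers have exactly the right real dimension to match those of $\mathrm{T}^\ast \mathcal{F}_{n_1,\dots,n_k}$: a quick count gives $\sum_i 2 n_i(N-n_i) = 2(N^2 - \sum_i n_i^2) = 2\dim_{\mathbb{R}} \mathcal{F}_{n_1,\dots,n_k}$. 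Concretely, one would send a tuple $(\mathsf{L}_1,\dots,\mathsf{L}_k) \in \mathcal{X}$ to a pair consisting of an orthogonal flag (obtained by polar-decomposing the assembled frame matrix) together with cotangent data measuring the deviation from orthogonality.

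The main obstacle I anticipate is verifying that the resulting map is a genuine symplectomorphism, i.e.\ that it intertwines $\Omega$ with the canonical symplectic form on $\mathrm{T}^\ast \mathcal{F}_{n_1,\dots,n_k}$. Here $\mathrm{U}(N)$-equivariance is crucial: by homogeneity it reduces the check to a single reference flag, and combined with the uniqueness clause in Weinstein's theorem the remaining task is a fiberwise calculation matching the variation of $\Omega$ in the ``non-orthogonal'' directions with the canonical pairing between $\mathrm{T} \mathcal{F}_{n_1,\dots,n_k}$ and $\mathrm{T}^\ast\mathcal{F}_{n_1,\dots,n_k}$. This should be tractable using the explicit Fubini--Study expression on each Grassmannian factor, so that the global symplectomorphism follows from the local one by $\mathrm{U}(N)$-equivariance.
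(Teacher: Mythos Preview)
Your approach aligns with the paper's, but with a detour that does not actually buy you anything. The paper (citing~\cite{Bykov_2024}) writes down the map directly: assemble the frames into $\mathsf{Z}=(Z_1\,|\,\dots\,|\,Z_k)$, polar-decompose $\mathsf{Z}=\mathsf{U}\mathsf{H}$, and declare $\mathsf{U}$ to be the base point in $\mathcal{F}_{n_1,\dots,n_k}$ while the off-block-diagonal part of $\mathsf{H}=(\mathsf{Z}^\dagger\mathsf{Z})^{1/2}$ is the cotangent fiber coordinate. This is exactly your ``Cartan decomposition / polar decomposition'' step, and the paper does not invoke Weinstein at all.

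Your use of Weinstein's theorem is logically harmless but superfluous, and in fact slightly misleading: Weinstein hands you \emph{some} symplectomorphism on \emph{some} tubular neighborhood, with no control over either. It does not help you extend to all of $\mathcal{X}$, nor does it tell you that the polar-decomposition map is the one realizing it. As you yourself recognize in the final paragraph, the real content is the direct verification that the explicit map $\mathsf{Z}\mapsto(\mathsf{U},\mathsf{H})$ pulls back the canonical form on $\mathrm{T}^\ast\mathcal{F}$ to $\Omega$; once that computation is done, Weinstein is redundant, and until it is done, Weinstein does not substitute for it. So drop the Weinstein scaffolding and go straight to the explicit map and the symplectic-form check (which the paper defers to~\cite{Bykov_2024}); your $\mathrm{U}(N)$-equivariance reduction to a single fiber is the right way to organize that computation.
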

The symplectomorphism can be described as follows. Every Grassmanian $\mathrm{Gr}(n_i,N)$ is parameterized by a matrix $Z_i$ of size $N \times n_i$. One assembles these matrices in a single $N\times N$ matrix
\bea
\mathsf{Z} = \begin{pmatrix}
        Z_1 & Z_2 & \dots & Z_k
    \end{pmatrix}\,.
    \eea
Then, by virtue of the polar decomposition theorem, for a non-degenerate matrix  $\mathsf{Z}$\footnote{The condition $\mathrm{det}\left(\mathsf{Z}\right)\neq 0$ distinguishes $\mathcal{X}$ in Proposition \ref{subbundle}.} there is a decomposition $\mathsf{Z} =\mathsf{U}\mathsf{H}$, where $\mathsf{U}:=\mathsf{Z}\left(\mathsf{Z}^{\dagger}\mathsf{Z}\right)^{-1/2}$ is unitary and $\mathsf{H}:=\left(\mathsf{Z}^{\dagger}\mathsf{Z}\right)^{1/2}$ a positive-definite Hermitian matrix. The matrix $\mathsf{U}$ defines a point in $\mathcal{F}_{n_1,\dots,n_k}$ whereas~$\mathsf{H}$ specifies a point in the fiber of $\mathrm{T}^{\ast}\mathcal{F}_{n_1,\dots,n_k}$.

In fact, the embedding (\ref{embedFlag}) maybe used to describe $\mathcal{F}_{n_1,n_2,\dots, n_k}$. A point in $\mathcal{F}_{n_1,n_2,\dots, n_k}$ is associated with the unitary matrix $\mathsf{U} = \begin{pmatrix}
        \mathrm{U}_1 & \mathrm{U}_2 & \dots & \mathrm{U}_k
    \end{pmatrix}$, where $\mathrm{U}_i$ is a matrix corresponding to $\mathrm{Gr}(n_i, N)$ in the RHS of the embedding. In these terms, the  most general invariant metric on the flag manifold has the form \cite{Arvanitoyeorgos_1993}
\begin{align}
    \mathrm{d}s^2 = \sum_{i<j}\frac{1}{\alpha_{ij}}\mathrm{Tr}\left(\mathrm{U}^{\dagger}_i\mathrm{d}\mathrm{U}_j\mathrm{U}^{\dagger}_j\mathrm{d}\mathrm{U}_i\right)\,,
\end{align}
where $\alpha_{ij} \in \mathbb{R}_{>0}$
In the case of $\mathcal{F}(N),$ we have 
\begin{align}\label{FNmetric}
    \mathrm{d}s^2 = \sum_{i<j}\frac{1}{\alpha_{ij}}|\Bar{u}_i \cdot \mathrm{d}u_j|^2\,,
\end{align}
where $\{u_i \in \mathbb{CP}^{N-1}\}_{i=1}^N$ are unit-normalized and orthogonal, i.e. $\Bar{u}_i \cdot u_j = \delta_{ij}$. We use the notation $\Bar{x}\cdot y := \sum_{k=1}^N \Bar{x}_k y_k$ for the usual scalar product. 

\section{$\mathrm{SU}(N)$ spin chains and full flag manifolds}\label{SUNSpinChain}
In this section, we will discuss the construction of a special $\mathrm{SU}(N)$ spin chain and its relationship to one-dimensional sigma models on full flag manifolds.

The $\mathrm{SU}(N)$ spin chain we are dealing with consists of $N$ sites. For each site, we assign a specific irreducible representation. We refer to this as $\mathrm{Sym}(p)$, which is the $p$-th symmetric power of the fundamental representation of $\mathrm{SU}(N)$ and corresponds to the following Young diagram:
    \begin{gather}\label{SYMdef}
        \mathrm{Sym}(p) =
        \underbrace{
        \begin{ytableau}
               ~ & ~ & \dots & ~
        \end{ytableau}
        }_{p}.
    \end{gather}
Therefore, the full Hilbert space of the model, $\mathrm{Hilb}(p)$, is represented as 
\begin{align}\label{Hilbp}
    \mathrm{Hilb}(p) := \underbrace{
        \begin{ytableau}
               ~ & ~ & \dots & ~
        \end{ytableau}
        }_{p}^1 \otimes \underbrace{
        \begin{ytableau}
               ~ & ~ & \dots & ~
        \end{ytableau}
        }_{p}^2 \otimes\dots \otimes \underbrace{
        \begin{ytableau}
               ~ & ~ & \dots & ~
        \end{ytableau}
        }_{p}^N = \mathrm{Sym}(p)^{\otimes N}\,.
\end{align}
Now we define the Hamiltonian of the system to be\footnote{We always assume summation w.r.t. repeated indices.}
\begin{align}
    \mathcal{H} = \sum_{i<j}\alpha_{ij} \mathrm{S}_{i}^a \mathrm{S}_{j}^a+ \mathrm{const}\,,\label{spinHamN}
\end{align}
where $\mathrm{S}_i^a, \,i=1,2,\dots,N$ are the  generators $\tau_a$ of $\mathfrak{su}(N)$  ($a = 1,2,\dots, \mathrm{dim}\;\mathfrak{su}(N)$) in $\mathrm{Sym}(p)$. We use the normalization $\mathrm{Tr}(\tau^a \tau^b)=\delta^{ab}$  for the generators. Thus, up to a constant, $\mathrm{S}_i^a \mathrm{S}_j^a$ is a quadratic Casimir operator of $\mathfrak{su}(N)$, acting in the tensor product of representations with labels $i$ and $j$. The number `$\mathrm{const}$' is chosen so that the eigenvalue of $\mathcal{H}$, corresponding to the trivial representation in $\mathrm{Hilb}(p)$, is zero.

Remarkably, one can find a classical dynamical system whose quantum version is  the above spin chain~\cite{Affleck_2022} (the simplest way to do this is by using the Schwinger-Wigner quantization technique). The respective action~is
\begin{align}
    \mathcal{S} = \int dt\,\left( i \sum_{j=1}^N\,\Bar{z}_j \cdot \dot{z}_j  - \sum_{k<j}\alpha_{kj} |\Bar{z}_k \cdot z_j|^2 \right)\,, \label{spinLagrSU(N)}
\end{align}
where $z_i \in \mathbb{CP}^{N-1}$, $\dot{f}:=\frac{\mathrm{d}\;}{\mathrm{d}t}f$ and we have also imposed the normalization constraints $\Bar{z}_i \cdot z_i = p$. Using the polar decomposition theorem for $\mathsf{Z} = \begin{pmatrix}
        z_1 & z_2 & \dots & z_N
    \end{pmatrix} = \mathsf{U}\mathsf{H}$ (here $\mathsf{U}$ is unitary and $\mathsf{H}$ Hermitian positive-definite), upon eliminating  $\mathsf{H}$ one recovers\footnote{To be more precise, this is true only in the limit $p\to \infty$. We refer to \cite{Bykov_2024} and \cite{IsotropicBykov} for further details.} the action of a one-dimensional sigma model on $\mathcal{F}(N)$,  given by
\begin{align}
    \mathcal{S}_{\sigma}=\int dt\,\sum_{k<j}\frac{1}{\alpha_{kj}}|\Bar{u}_k \cdot \dot{u}_j|^2\,,
\end{align}    
where $u_i \in \mathbb{CP}^{N-1}$ are the columns of $\mathsf{U}$, so that  $\Bar{u}_k\cdot u_j = \delta_{kj}$.

At the classical level, the above discussion implies that solutions to the e.o.m following from (\ref{spinLagrSU(N)}) are in one-to-one correspondence with geodesics on $\mathcal{F}(N)$ having restricted momenta, where  the restriction is determined by $p$ and disappears in the limit~$p \to \infty$. At the quantum level, we arrive at the following proposition:
\begin{prop}[\cite{Bykov_2024}]\label{spinFlagConnection}
        Let $\mathcal{H}$ be the Hamiltonian~(\ref{spinHamN}) acting in the space~$\mathrm{Hilb}(p)$. Then
    \bea \label{partitionFuncF3}
        \underset{p\to\infty}{\mathrm{lim}}\,\mathrm{Tr}_{\mathrm{Hilb}(p)}(g\,e^{- \tau \mathcal{H}})=\mathrm{Tr}_{\mathrm{L}^2(\mathcal{F}(3))}(g\,e^{- \tau\mathcal{H}_{\mathrm{particle}}})\,, 
    \eea
    where $\tau>0$ is a parameter, $g\in\mathrm{SU}(N)$ is a fixed group element taken in the relevant representation, $\mathcal{H}_{\mathrm{particle}} = -\triangle$, where $\triangle$ is the Laplace-Beltrami operator for the metric~(\ref{FNmetric}) on $\mathcal{F}(N)$. 
\end{prop}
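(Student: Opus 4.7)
The plan is to establish the identity via a coherent state / path integral representation of the left-hand side, perform the polar decomposition of Proposition \ref{subbundle} inside the functional integral, and then identify the resulting sigma model partition function on $\mathcal{F}(N)$ with the Feynman–Kac representation of the right-hand side.

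First, I would realize $\mathrm{Sym}(p)$ as the geometric quantization of $(\mathbb{CP}^{N-1}, p\,\omega_{\mathrm{FS}})$, i.e.\ as $H^0(\mathbb{CP}^{N-1}, \mathcal{O}(p))$. This yields a complete family of coherent states on each site, parameterized by $z_i \in \mathbb{C}^N$ with the normalization $\bar z_i \cdot z_i = p$, and a resolution of the identity for the full Hilbert space $\mathrm{Hilb}(p)$ indexed by $(z_1,\ldots,z_N)$. Slicing $e^{-\tau\mathcal{H}}$ into $M$ infinitesimal steps and inserting coherent state completeness relations between them, the trace $\mathrm{Tr}_{\mathrm{Hilb}(p)}(g\,e^{-\tau\mathcal{H}})$ becomes, in the $M\to\infty$ limit, the phase-space path integral with action precisely $\mathcal{S}$ of (\ref{spinLagrSU(N)}) and boundary conditions twisted by the unitary left multiplication $z_i(\tau) = g\,z_i(0)$.

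Second, inside the path integral I would change variables according to the polar decomposition $\mathsf{Z}(t) = \mathsf{U}(t)\,\mathsf{H}(t)$ of Proposition \ref{subbundle}, with $\mathsf{U}(t)\in\mathcal{F}(N)$ and $\mathsf{H}(t)$ Hermitian positive-definite. The constraints $\bar z_i \cdot z_i = p$, entering as delta-functions in the measure, fix the diagonal of $\mathsf{H}$, and in the limit $p\to\infty$ the unique saddle point is $\mathsf{H}=\sqrt{p}\,\mathrm{Id}$. Expanding $\mathsf{H}=\sqrt p\,\mathrm{Id} + \delta\mathsf{H}$ and performing the resulting Gaussian integration over $\delta\mathsf{H}$, the Berry phase term $i\sum_j \bar z_j\cdot\dot z_j$ together with the bilinear $|\bar z_k \cdot z_j|^2$ combine, after using orthonormality of the columns of $\mathsf{U}$, into precisely the sigma-model action $\mathcal{S}_\sigma = \int dt\sum_{k<j} \alpha_{kj}^{-1}|\bar u_k\cdot \dot u_j|^2$. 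The twisted boundary condition transports without modification: $z_i\mapsto g z_i$ corresponds to $\mathsf{U}\mapsto g\mathsf{U}$.

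Third, the resulting path integral is the standard phase-space Feynman–Kac representation of $\mathrm{Tr}_{L^2(\mathcal{F}(N))}(g\,e^{-\tau\mathcal{H}_{\mathrm{particle}}})$, since $\mathcal{S}_\sigma$ is the action of a free particle on $\mathcal{F}(N)$ equipped with the metric (\ref{FNmetric}), whose canonical quantization is $-\triangle$.

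The main obstacle is the rigorous control of the large-$p$ limit at the quantum (not classical) level: one must show that integrating out $\mathsf{H}$ inside the path integral reproduces $\mathcal{H}_{\mathrm{particle}} = -\triangle$ with the correct operator ordering and without a residual scalar curvature term spoiling the Laplace–Beltrami identification, uniformly for fixed $\tau>0$. A useful cross-check, which also provides an independent proof, is to decompose both sides into $\mathrm{SU}(N)$-isotypic components: on the right, Peter–Weyl gives $L^2(\mathcal{F}(N)) = \bigoplus_\lambda V_\lambda \otimes V_\lambda^H$ where $H$ is the maximal torus; on the left, the branching of $\mathrm{Sym}(p)^{\otimes N}$ stabilizes as $p\to\infty$ so that each irrep $V_\lambda$ appears with multiplicity $\dim V_\lambda^H$, and the matrix of $\mathcal{H}$ on the $V_\lambda^H$ multiplicity space converges to the matrix of $-\triangle$ in the corresponding finite-dimensional block, giving the claimed equality of traces insertion-by-insertion in $g$.
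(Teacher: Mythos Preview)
The paper does not actually prove this proposition: it is stated with a citation to \cite{Bykov_2024} (and, in the surrounding text, to \cite{IsotropicBykov}), and the reader is referred there for details. What the paper does provide is a brief classical-level sketch immediately preceding the proposition --- polar decomposition $\mathsf{Z}=\mathsf{U}\mathsf{H}$, elimination of $\mathsf{H}$, recovery of the sigma-model action $\mathcal{S}_\sigma$ in the limit $p\to\infty$ --- which is exactly your steps one and two, only stated at the level of classical actions rather than inside a path integral. So your proposal is faithful to the spirit of the argument the paper alludes to, and you have correctly identified the genuine gap: the passage from the classical large-$p$ reduction to the quantum statement requires controlling operator ordering and possible curvature corrections, which neither you nor the present paper carries out.

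Your closing paragraph (the Peter--Weyl / isotypic-component cross-check) is in fact closer to what this paper does establish on its own. In Section~\ref{compSpec} the authors construct, for $N=3$, an explicit $\mathrm{SU}(3)$-equivariant embedding $\mathrm{Hilb}(p)\hookrightarrow L^2(\mathcal{F}(3))$ via Schwinger--Wigner oscillators, show it is compatible with $\mathrm{Hilb}(p)\hookrightarrow\mathrm{Hilb}(p+1)$, and verify (see~(\ref{spinspinoscillators})) that the spin operators $\mathrm{S}_i^a\mathrm{S}_j^a$ commute with the embedding up to a $p$-dependent constant, so that $\mathrm{Spec}(p)\subset\mathrm{Spec}(p+1)$. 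This is precisely the ``stabilization of the multiplicity space and of the Hamiltonian block'' that you outline, and it is the argument that actually does the work in this paper. If you want to turn your sketch into a proof that stands on its own, this representation-theoretic route is the one to make rigorous; the coherent-state path integral, as you note yourself, remains heuristic without substantial further analysis.
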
 

As an equality of partition functions, (\ref{partitionFuncF3}) implies that 
\bea
\underset{p\to\infty}{\mathrm{lim}}\, \mathrm{Hilb}(p) = \mathrm{L}^2(\mathcal{F}(N))
\eea
and the spectrum of $\mathcal{H}$ coincides with that of $\mathcal{H}_{\mathrm{particle}}$ in the same limit. Proposition~\ref{spinFlagConnection} provides a way to compute the Laplace-Beltrami operator spectrum for $\mathcal{F}(N)$.

Let us now discuss a special class of $\mathrm{SU}(N)$ spin chains, namely the ones arising in the Gaudin model (see \cite{Feigin1994-ag,Mukhin2007-rl} for the general introduction). Here one considers a spin chain consisting of $n$ sites and, for each site $i$, one assigns  a representation $\mathrm{V}_i$ of $\mathrm{SU}(N)$. The Hilbert space of the model is $\bigotimes_{i=1}^n \mathrm{V}_i$. One constructs the following set of Hamiltonians, which are customarily called the Gaudin Hamiltonians:
\begin{align}
    \mathrm{H}_i = \sum_{j \neq i}\frac{\mathrm{S}_{i}^a \mathrm{S}_{j}^a}{z_i - z_j}\,,\quad \text{for}\;\,i=1,\dots,n\,, \label{gaudinHam}
\end{align}
where $z_i$'s $\in \mathbb{R}$ are some numbers, which we refer to as the Gaudin parameters, chosen so that $z_i \neq z_j$ if $i\neq j$, and $\mathrm{S}_i^a$ is a generator $\tau^a$ of $\mathfrak{su}(N)$ in representation $\mathrm{V}_i$. We use the same normalization for $\tau^a$'s as in (\ref{spinHamN}). Just like before, $\mathrm{S}_{i}^a \mathrm{S}_{j}^a$ is a quadratic Casimir operator of $\mathfrak{su}(N)$, acting in $\mathrm{V}_i \otimes \mathrm{V}_j$. One easily checks that $[\mathrm{H}_i, \mathrm{H}_j] = 0$ and $\sum_{i=1}^n \mathrm{H}_i = 0$. We will now consider 
\bea\label{tham}
\mathcal{H}=\sum_{i=1}^n t_i \mathrm{H}_i
\eea
as the Hamiltonian, and it turns out that in this case one has a completely integrable system  called the Gaudin model (see, for example, \cite{gaudin_1976,Jurco1990-jb,GaudinBook}). The spectrum of such spin chain can be computed using the Bethe ansatz, which we will briefly recall below.

It is easy to see that, in the special case $n=3$, all spin Hamiltonians, $\sum_{i < j} \alpha_{ij} \mathrm{S}_i^a \mathrm{S}_j^a$, are combinations of Gaudin Hamiltonians for a suitable choice\footnote{In fact, there is even some freedom in choosing $t_i$'s and $z_i$'s.} of $z_i$'s and $t_i$'s. Thus, the spin chain corresponding to $\mathcal{F}(3)$ is integrable for all possible $\alpha_{ij}$'s at the quantum level. Based on this, one should also expect integrability and the existence of Gaudin-type integrals of motion at the classical level. Below, using the relation to spin chains that we have reviewed, we will find solutions to both the classical and quantum problems for the flag manifold $\mathcal{F}(3)$, equipped with an arbitrary invariant metric.

Before proceeding, let us examine the classical aspects of the considered spin chain~(\ref{spinLagrSU(N)}). The e.o.m. following from the action are\footnote{Here we have used the $\mathrm{U}(1)^{N}$ gauge redundancy to eliminate the Lagrange multipliers corresponding to the constraints $\Bar{z}_i \cdot z_i = p$.} 
\begin{align}
    &i\, \frac{\mathrm{d}\,\;}{\mathrm{d}t}\mathsf{Z}= \mathsf{Z}\times\mathfrak{I}(\mathcal{L})\,,
    \label{ZeqNewForm}\\
    &\text{where}\;\; \mathsf{Z}:= \begin{pmatrix}
        z_1 & z_2 &\dots & z_N
    \end{pmatrix}\,,\nonumber\\
    &\left[\mathcal{L}\right]_{ij} = \begin{cases}
        \Bar{z}_i \cdot z_j\,,\quad i\neq j\\
        \quad \quad 0,\quad i = j
    \end{cases}\nonumber\\
    &\text{and}\quad\left[\mathfrak{I}(\mathcal{L})\right]_{ij} = \alpha_{ij}\left[\mathcal{L}\right]_{ij}\,,\nonumber
\end{align}
we have assumed $\alpha_{ij}=\alpha_{ji} $ and $ \alpha_{ii}=0$. As we will show below, if one finds $\mathcal{L}$ then the solution to (\ref{ZeqNewForm}) can be easily constructed as well. Therefore let us write out the equation on $\mathcal{L}$ that follows from~(\ref{ZeqNewForm}):
\begin{align}
    &i\frac{\mathrm{d}\;}{\mathrm{d}t}\mathcal{L} = \left[\mathcal{L}, \mathfrak{I}(\mathcal{L})\right]\,.\label{eulerEq}
\end{align}
This is the famous Euler equation on $\mathrm{SU}(N)/\mathrm{S}(\mathrm{U}(1)^N)$ (the quotient corresponds to the fact that the diagonal values of $\mathcal{L}$ are zero). Notice that, when $\mathcal{L}$ is pure imaginary, the above equations reduce to Euler's rigid body equations for $\mathfrak{so}(N)$. Integrability of such equations was studied in~\cite{Manakov_1977} and, more generally, in~\cite{MiFom}. In particular, it was proven that the system is completely integrable in the case when the parameters are chosen as 
\begin{equation}
    \alpha_{ij}=\frac{t_i-t_j}{z_i-z_j}\,.
\end{equation}
Clearly, this choice of $\alpha_{ij}$'s corresponds to the case of the Gaudin model via the map described in Section~\ref{SUNSpinChain}. 

Now, suppose that $\mathcal{L}$ is known.
Thus, we may first write the solution to (\ref{ZeqNewForm}) formally as
\begin{align}\label{Zformalsol}
    \mathsf{Z}(t) = \mathsf{Z}(0)\times \mathsf{V}(t)\,,\quad\quad \textrm{where}\quad\quad \mathsf{V}(t)=\mathcal{T}\left[\displaystyle e^{-i\int_{0}^t \mathrm{d}t \,\mathfrak{I}(\mathcal{L})}\right]\,.
\end{align}
However, as we shall now show, $\mathsf{V}(t)$ can be calculated purely algebraically from the data that we already have: 
\begin{lem}
 $\mathsf{V}(t)$ can be expressed explicitly through the matrix of eigenvectors of $\mathcal{L}(t)$.   
\end{lem}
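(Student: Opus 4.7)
The plan is to exploit the Lax form of the Euler equation~\eqref{eulerEq} together with the fact that $\mathsf{V}(t)$ solves a linear equation whose generator differs from the Lax partner only by a rotation. Since $\mathcal{L}$ is Hermitian ($\alpha_{ij}=\alpha_{ji}$, and $\mathcal{L}_{ij}=\bar z_i\cdot z_j$ for $i\neq j$), I will write $\mathcal{L}(t)=\mathsf{E}(t)\,\Lambda\,\mathsf{E}(t)^{\dagger}$ with $\mathsf{E}$ unitary and $\Lambda$ diagonal. Rewriting \eqref{eulerEq} as $\dot{\mathcal{L}}=[M,\mathcal{L}]$ with $M:=i\,\mathfrak{I}(\mathcal{L})$ shows that the evolution is isospectral, so $\Lambda$ is time-independent and one is free to treat $\mathsf{E}(t)$ as the only $t$-dependent piece.

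The first key step is to derive an ODE for $\mathsf{E}(t)$. Differentiating $\mathcal{L}\mathsf{E}=\mathsf{E}\Lambda$ and using the Lax equation gives
\begin{equation}
\mathcal{L}\,(\dot{\mathsf{E}}-M\mathsf{E})=(\dot{\mathsf{E}}-M\mathsf{E})\,\Lambda\,,
\end{equation}
so (assuming generic non-degenerate spectrum) the combination $\dot{\mathsf{E}}-M\mathsf{E}$ commutes with the diagonal $\Lambda$ in the eigenbasis, i.e.\ $\dot{\mathsf{E}}-M\mathsf{E}=\mathsf{E}\,D$ for some diagonal matrix $D(t)=\mathrm{diag}\bigl(\mathsf{E}^{\dagger}\dot{\mathsf{E}}-\mathsf{E}^{\dagger}M\mathsf{E}\bigr)$. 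The apparent ambiguity in $D$ corresponds merely to the freedom of multiplying each column of $\mathsf{E}$ by a $t$-dependent phase, and can be fixed, e.g., by the parallel-transport gauge $\mathrm{diag}(\mathsf{E}^{\dagger}\dot{\mathsf{E}})=0$.

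The second step is to relate this to $\mathsf{V}(t)$. From \eqref{Zformalsol} one has $\dot{\mathsf{V}}=-i\,\mathsf{V}\,\mathfrak{I}(\mathcal{L})=-\mathsf{V}M$. Combining with $M\mathsf{E}=\dot{\mathsf{E}}-\mathsf{E}D$ yields
\begin{equation}
\frac{d}{dt}\bigl(\mathsf{V}\mathsf{E}\bigr)=\dot{\mathsf{V}}\mathsf{E}+\mathsf{V}\dot{\mathsf{E}}=-\mathsf{V}M\mathsf{E}+\mathsf{V}(M\mathsf{E}+\mathsf{E}D)=\bigl(\mathsf{V}\mathsf{E}\bigr)\,D\,.
\end{equation}
Because $D$ is diagonal the time-ordering trivialises, and with the initial condition $\mathsf{V}(0)=\mathbf{1}$ one integrates to obtain
\begin{equation}
\mathsf{V}(t)=\mathsf{E}(0)\,\exp\!\Bigl(\int_{0}^{t}D(s)\,ds\Bigr)\,\mathsf{E}(t)^{\dagger}\,,
\end{equation}
which is the required formula expressing $\mathsf{V}(t)$ purely through the eigenvector matrix $\mathsf{E}(t)$ of $\mathcal{L}(t)$ (the exponent involves only the diagonal quantity $D$ built from $\mathsf{E}$ and $\mathfrak{I}(\mathcal{L})$).

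The main technical nuisance is the treatment of points where the spectrum of $\mathcal{L}(t)$ becomes degenerate: there the splitting $\dot{\mathsf{E}}-M\mathsf{E}=\mathsf{E}D$ with $D$ diagonal need not be unique, and the decomposition above has to be interpreted as describing the motion of a frame in the isotypical blocks. In the generic case relevant for $\mathcal{F}(3)$ this is avoided, and everything reduces to the clean algebraic formula above; otherwise one simply enlarges $D$ to be block-diagonal with respect to the degenerate sub-blocks, and the same derivation goes through.
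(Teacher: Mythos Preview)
Your proof is correct and follows essentially the same approach as the paper: you diagonalise $\mathcal{L}(t)=\mathsf{E}(t)\Lambda\mathsf{E}(t)^\dagger$, derive the ODE $\dot{\mathsf E}=M\mathsf E+\mathsf E D$ with $D$ diagonal (the paper writes this as $-i\dot{\mathsf E}=\mathfrak{I}(\mathcal L)\mathsf E+\mathsf E\,\mathsf Q$, so $D=i\mathsf Q$), and then integrate to obtain $\mathsf V(t)=\mathsf E(0)\exp\!\bigl(\int_0^t D\bigr)\mathsf E(t)^\dagger$. The only cosmetic difference is that you compute $\frac{d}{dt}(\mathsf V\mathsf E)=(\mathsf V\mathsf E)D$ directly, whereas the paper first writes the solution for $\mathsf E(t)$ and then reads off $\mathsf V(t)$; both routes are the same calculation, and your explicit treatment of the degenerate-spectrum case is a small bonus.
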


\begin{proof}
    First, notice that the Euler equation~(\ref{eulerEq}) implies that the eigenvalues of $\mathcal{L}(t)$ are constant in time, so that we may write $\mathcal{L}(t)=\mathsf{E}(t) \Lambda \mathsf{E}(t)^\dagger$, with~$\Lambda$ a constant diagonal matrix of eigenvalues and $\mathsf{E}(t)$ the matrix of eigenvectors of $\mathcal{L}(t)$, defined up to right multiplication by a diagonal matrix of phases. We will additionally assume that the spectrum of $\mathcal{L}(t)$ is simple. In this case it is easy to show that the Euler equation~(\ref{eulerEq}) leads to the following equation for $\mathsf{E}(t)$:
\bea\label{Eeq}
-i\,\frac{\mathrm{d}\,\;}{\mathrm{d}t}\mathsf{E}(t)=\mathfrak{I}(\mathcal{L})\,\mathsf{E}(t)+\mathsf{E}(t)\,\mathsf{Q}\,,
\eea
where $\mathsf{Q}$ is diagonal. It can be viewed as a $\mathfrak{u}(1)^N$ gauge connection, where the gauge freedom is related to the fact that the phases of the eigenvectors featuring in $\mathsf{E}(t)$ may be chosen arbitrarily. In fact, since we know the explicit expression for $\mathcal{L}(t)$, we may pick any matrix $\mathsf{E}(t)$ of its eigenvectors and then read off $\mathsf{Q}$ from the above equation: $\mathsf{Q}(t)=-i \mathsf{E}^\dagger \dot{\mathsf{E}}-\mathsf{E}^\dagger\mathfrak{I}(\mathcal{L}) \mathsf{E}$. On the other hand, the solution to~(\ref{Eeq}) is 
\bea
\mathsf{E}(t)=\mathsf{V}(t)^\dagger \mathsf{E}(0)\,\mathrm{exp}\left(i\,\int_0^t \mathsf{Q}(t') dt'\right)
\eea
This allows expressing  $\mathsf{V}(t)$ through $\mathsf{E}(t)$ as follows:
\bea\label{Vsol}
\mathsf{V}(t)=\mathsf{E}(0)\,\mathrm{exp}\left(i\,\int_0^t \mathsf{Q}(t') dt'\right)\,\mathsf{E}(t)^\dagger\,,
\eea
which completes the proof.
\end{proof}

Substituting~(\ref{Vsol}) in~(\ref{Zformalsol}) gives an explicit solution for $\mathsf{Z}(t)$, hence solving the classical spin chain problem. 
Using the polar decomposition theorem, we map $\mathsf{Z}$ to a geodesic in $\mathcal{F}(N)$, which is represented by the unitary matrix $\mathsf{U}= \mathsf{Z}\left(\mathsf{Z}^{\dagger}\mathsf{Z}\right)^{-1/2}$. Clearly, we arrive at
\begin{align}
    \mathsf{U}(t)=\mathsf{U} (0)\times \mathsf{V}(t)\,,\label{geodUsol}
\end{align}
so that the problem of finding geodesics can be solved as soon as one has a solution to the Euler equation (\ref{eulerEq}). Summarizing all of the above, we arrive at\footnote{Strictly speaking, the paper~\cite{MiFom} dealt with geodesic flow on a Lie group (say, $\mathrm{SU}(n)$), but their results easily extend to flag manifolds as well.} 
\begin{prop}[\cite{MiFom}]
    Geodesic flow on $\mathcal{F}(N)$ is completely integrable in the case of invariant metrics that correspond to the Gaudin model, i.e. such that~${\alpha_{ij} = \frac{t_i - t_j}{z_i - z_j}}$.
\end{prop}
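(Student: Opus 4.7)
The plan is to realize the geodesic Hamiltonian for the metric with $\alpha_{ij}=\frac{t_i-t_j}{z_i-z_j}$ as one element of a large Poisson-commuting algebra of functions coming from the classical Gaudin model, and then to invoke Liouville's theorem. The target is to produce $\frac{1}{2}\dim\mathrm{T}^\ast\mathcal{F}(N)=N^2-N$ functionally independent commuting integrals on the reduced phase space.

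First I would reformulate the geodesic problem on the spin-chain side. In the $p\to\infty$ limit of the action~(\ref{spinLagrSU(N)}), Proposition~\ref{subbundle} identifies the reduced phase space with an open subset of $\mathrm{T}^\ast\mathcal{F}(N)$, while the Legendre transform produces the Hamiltonian $\mathcal{H}=\sum_{k<j}\alpha_{kj}|\bar z_k\cdot z_j|^2$. Introducing classical spin variables $S_i^a=\bar z_i\tau^a z_i$ and using the $\mathfrak{su}(N)$ Fierz identity, one has $|\bar z_k\cdot z_j|^2=S_k^aS_j^a+\mathrm{const}$. A direct algebraic rearrangement then gives
\begin{equation*}
\sum_{i=1}^N t_i\,\mathrm{H}_i=\sum_{k<j}\frac{t_k-t_j}{z_k-z_j}\,S_k^aS_j^a,
\end{equation*}
so that with the prescribed $\alpha_{ij}$ the geodesic Hamiltonian coincides, modulo Casimirs, with the Gaudin combination~(\ref{tham}) of the Hamiltonians~(\ref{gaudinHam}), now viewed as classical observables on $\bigoplus_{i=1}^N\mathfrak{su}(N)^\ast$.

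Second, I would import the integrable structure of the Gaudin model. Introducing the rational Lax matrix $L(\lambda)=\sum_{i=1}^N\frac{\mathbf{S}_i}{\lambda-z_i}$, with $\mathbf{S}_i=S_i^a\tau^a$, the standard classical rational $r$-matrix Poisson bracket on $\bigoplus_{i=1}^N\mathfrak{su}(N)^\ast$ implies $\{\tr L(\lambda)^p,\tr L(\mu)^q\}=0$ for all $\lambda,\mu,p,q$. Expanding $\tr L(\lambda)^k$ in $\lambda$ for $k=2,\dots,N$ thus yields a commuting family of conserved quantities that contains $\mathcal{H}$ and extends well beyond the Gaudin Hamiltonians themselves. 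The hard step is the functional independence count: after reducing by the $\mathrm{U}(1)^N$ phase gauge and restricting to the symplectic leaf on which the single-site Casimirs are fixed, one must verify that these integrals descend to exactly $N^2-N$ functionally independent functions on $\mathrm{T}^\ast\mathcal{F}(N)$. This is precisely the content of the Mishchenko--Fomenko shift-of-argument construction in~\cite{MiFom}, which I would invoke directly rather than redo. Combined with Liouville's theorem, this yields complete integrability of the geodesic flow.
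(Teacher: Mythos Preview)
Your argument is sound, but it takes a somewhat different route from the paper. The paper does not supply a self-contained proof of this proposition; instead, the preceding discussion reduces the geodesic problem to the Euler equation~(\ref{eulerEq}), $i\dot{\mathcal L}=[\mathcal L,\mathfrak I(\mathcal L)]$, on $\mathfrak{su}(N)^\ast$ modulo the Cartan, observes that for $\alpha_{ij}=\frac{t_i-t_j}{z_i-z_j}$ this is exactly the Manakov inertia tensor, and then cites~\cite{Manakov_1977,MiFom} for complete integrability of that single-copy Euler system. Lemma~1 (eigenvector reconstruction of $\mathsf V(t)$) and formula~(\ref{geodUsol}) are what promote integrability of the $\mathcal L$-equation to the full geodesic flow on $\mathcal F(N)$. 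In short: the paper works on \emph{one} copy of $\mathfrak{su}(N)^\ast$ (the moment map $\mathcal L$), whereas you stay on the multi-site phase space $(\mathbb{CP}^{N-1})^{\times N}\simeq T^\ast\mathcal F(N)$ and use the Gaudin Lax matrix $L(\lambda)$ directly. Both pictures exhibit the same integrable hierarchy, and your $r$-matrix argument for Poisson commutativity of $\tr L(\lambda)^k$ is of course correct.

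One point to tighten: what you invoke from~\cite{MiFom} for the functional-independence count is not quite the right theorem. The Mishchenko--Fomenko shift-of-argument result is a statement on a \emph{single} $\mathfrak g^\ast$ (and this is what the paper uses, via the $\mathcal L$-matrix). On a product of coadjoint orbits the independence of the Gaudin spectral invariants is a separate (though closely related) statement; it does not follow verbatim from~\cite{MiFom}, so you should either cite the appropriate result for the Gaudin/product-of-orbits setting or, as the paper implicitly does, pass first to the total-spin Euler equation and then apply~\cite{MiFom} there.
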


\section{Geodesics on $\mathcal{F}(3)$}\label{geodSection}

In this section, we will use the correspondence with the $\mathrm{SU}(3)$ spin chain to describe geodesics on $\mathcal{F}(3)$, equipped with the most general invariant metric
\begin{align}
    \label{F3metric}
    \mathrm{d}s^2 = \frac{1}{\alpha_{12}}|\Bar{u}_1 \cdot \mathrm{d}u_2|^2+\frac{1}{\alpha_{13}}|\Bar{u}_1 \cdot \mathrm{d}u_3|^2+\frac{1}{\alpha_{23}}|\Bar{u}_2 \cdot \mathrm{d}u_3|^2\,,
\end{align}
where $\{u_i \in \mathbb{CP}^2\}_{i=1}^3$ and $\Bar{u}_i \cdot u_j = \delta_{ij}$.

Let us analyze the relevant Euler equation (\ref{eulerEq}). In our case it reads as follows:
\begin{align}
    &i\,\dot{a} = \left(\alpha_{23} - \alpha_{13}\right)b\Bar{c}\,,\nonumber\\
    &i\,\dot{b} = \left(\alpha_{23} - \alpha_{12}\right)ac\,,\label{equatiForCIn}\\
    &i\,\dot{c} = \left(\alpha_{13} - \alpha_{12}\right)\Bar{a}b\,, \nonumber
\end{align}
where we have introduced the notation $a := \Bar{z}_1 \cdot z_2 $, $b := \Bar{z}_1 \cdot z_3$ and $c := \Bar{z}_2 \cdot z_3$. Our main goal is to find $a,b,c$. Then, along the lines of the discussion at the end of Section~\ref{SUNSpinChain} the solution to the geodesic equation on $\mathcal{F}(3)$ can as well be constructed.

\subsection{Integrals of motion.} 
Let us write down the Gaudin-inspired integrals of motion for our $\mathrm{SU}(3)$ spin system. They are easily guessed from the e.o.m. (\ref{equatiForCIn}):
\begin{align}
    &\mathsf{h}_1 = (\alpha_{12}-\alpha_{23})|a|^2 + (\alpha_{13}-\alpha_{23})|b|^2\,,\nonumber\\
    &\mathsf{h}_2 = (\alpha_{12} - \alpha_{13})|a|^2 + (\alpha_{23} - \alpha_{13}) |c|^2\,,\label{integralsOfMotion}\\
    &\mathsf{h}_3 = (\alpha_{12} - \alpha_{13})|b|^2 + (\alpha_{12} - \alpha_{23}) |c|^2\,,\nonumber\\
    &\mathsf{h}_4 = \Bar{a}b\Bar{c} + a\Bar{b}c\,. \nonumber
\end{align}
In order to bring the integrals (except $\mathsf{h}_4$) to Gaudin form, one should rescale them. However, in this section we will keep them in their present form.

\subsection{Elliptic curve and the $\wp$-function.}

Now we are ready to solve the equations~(\ref{equatiForCIn}). Firstly, let us derive an equation that contains only $a,b$ or $c$. To this end, we will take the derivative of the equation for~$c$. After some manipulations we arrive~at 
\begin{align}\label{cEquation}
    \frac{\mathrm{d}^2\;\,}{\mathrm{d}\tau^2} c = \left( \frac{\mathsf{h}_2}{\alpha_{13}-\alpha_{23}}+ \frac{\mathsf{h}_3}{\alpha_{23}-\alpha_{12}} + 2 |c|^2\right) c \,, 
\end{align}
where $\tau = \sqrt{(\alpha_{12} - \alpha_{23})(\alpha_{23} - \alpha_{13})}\,t$. Permuting the indices if necessary\footnote{The cases where some of the $\alpha_{ij}$'s are equal are much simpler. For example, if $\alpha_{13} = \alpha_{12}$, then from (\ref{equatiForCIn}) we have $c = \mathrm{const}$, and as a result one has a linear system of ODEs with constant coefficients for $a$ and $b$. This is the case of the so-called submersion  metric~\cite{Bykov_2024}. In the general case of homogeneous spaces metrics of this type were studied in \cite{Souris_2023}.}, one may assume that $\alpha_{12} > \alpha_{23} > \alpha_{13}$. In fact, equations of the same type hold as well for $a$ and $b$:
\begin{align}
    \frac{\mathrm{d}^2\;\,}{\mathrm{d}\tau^2} a = \left(  \frac{\mathsf{h}_3}{\alpha_{23}-\alpha_{12}} + 2 |c|^2\right) a \,,\label{aEquation}\\
    \frac{\mathrm{d}^2\;\,}{\mathrm{d}\tau^2} b = \left(  \frac{\mathsf{h}_2}{\alpha_{13}-\alpha_{23}} + 2 |c|^2\right) b \,.\label{bEquation}
\end{align}
Thus, if we know $|c|^2$,  these are linear equations for $a,b$ and $c$. An equation for $y:=|c|^2$ can in turn be derived from~(\ref{cEquation}):
\begin{align}
    &\frac{\mathrm{d}^2\;\,}{\mathrm{d}\tau^2} y = 6y^2-4\mathrm{I}_2 \,y-4\mathrm{I}_3\, y+2\,\mathrm{I}_2\, \mathrm{I}_3\,,\label{equationY}\\
    &\text{where}\quad \mathrm{I}_2:= - \frac{\mathsf{h}_2}{\alpha_{13} - \alpha_{23}}\,,\;\;\mathrm{I}_3:=- \frac{\mathsf{h}_3}{\alpha_{23} - \alpha_{12}}\,.\nonumber
\end{align}
The obvious first integral of (\ref{equationY}) is
\begin{align}\label{origEqOny}
    \left(\frac{\mathrm{d}\;\,}{\mathrm{d}\tau}y\right)^2 - 4 y \left(y - \mathrm{I}_2\right) \left(y - \mathrm{I}_3\right)= \mathrm{const}=:\mathrm{I}_4\,.
\end{align}
Using the definitions of $\mathrm{I}_2,\mathrm{I}_3$ and (\ref{equatiForCIn}), we may express the constant as 
\begin{align}
    \mathrm{I}_4 = -\frac{\left(\alpha_{12} - \alpha_{13}\right)^2}{\left(\alpha_{12} - \alpha_{23}\right)\left(\alpha_{23} - \alpha_{13}\right)}\,\mathsf{h}_4^2\,.
\end{align}

Equation (\ref{origEqOny}) is an equation defining the Weierstrass $\wp$-function\footnote{We refer to \cite{Akhiezer_1990} and \cite{Pastras_2020} for details on elliptic functions.} as well as the associated elliptic curve. To bring it to standard form, we make the shift $
    {\mathrm{Y} = y - \frac{1}{3}\left(\mathrm{I}_2+\mathrm{I}_3\right)}
$ arriving at
\begin{align}
    &\left(\frac{\mathrm{d}\;\,}{\mathrm{d}\tau} \mathrm{Y}\right)^2 = 4 \mathrm{Y}^3 - g_2 \mathrm{Y} - g_3\,,\label{WeierstrassY}\\
    &\text{where} 
    \quad g_2 = \frac{4}{3}\left(\mathrm{I}_2^2 - \mathrm{I}_2\,\mathrm{I}_3+\mathrm{I}_3^2 \right),\nonumber\\
    & \quad \quad \quad \;\;g_3 = -\mathrm{I}_4 + \frac{4}{27}\left(\mathrm{I}_2+\mathrm{I}_3\right)\left(2\mathrm{I}_2^2 - 5\mathrm{I}_2\,\mathrm{I}_3 + 2 \mathrm{I}_3^2\right)\,.\nonumber
\end{align}
The general solution to equation (\ref{WeierstrassY}) is of the form $\mathrm{Y} = \wp(\tau + \mathcal{O})$, where $\mathcal{O}$ depends on the initial conditions. The invariants $g_2$ and $g_3$ are real; in Appendix \ref{realRoots} we show that the polynomial $4 \mathrm{Y}^3 - g_2 \mathrm{Y} - g_3$ has only real roots. Thus, $\wp$ has one real and one imaginary period, as can be inferred from the theory of elliptic functions. The real period~$2\omega_1$ and the imaginary period~$2\omega_2$ are expressed as \cite{Akhiezer_1990}
\begin{align}
    &\omega_1 = \int^{\infty}_{e_1}\frac{\mathrm{d}\mathrm{Y}}{\sqrt{4 \mathrm{Y}^3-g_2 \mathrm{Y}-g_3}}\,,\\
    &\omega_2 = i\int^{\infty}_{-e_3}\frac{\mathrm{d}\mathrm{Y}}{\sqrt{4 \mathrm{Y}^3-g_2 \mathrm{Y}-g_3}}\,,\nonumber
\end{align}
where $e_1 > e_2 >e_3$ are the roots of $4\mathrm{Y}^3-g_2 \mathrm{Y}-g_3=0$.

The solution to (\ref{WeierstrassY}) must be real and regular because $\mathrm{Y}$ is expressed in terms of~$|c|^2$. Therefore, $\mathcal{O} = -\tau_0 + \omega_2$, where $\tau_0 \in \mathbb{R}$ \cite{Pastras_2020}. As a result, we find that
\begin{align}
    y =|c|^2= \wp\left(\tau - \tau_0 +\omega_2\right) + \frac{1}{3}\left(\mathrm{I}_2+\mathrm{I}_3\right)\,.
\end{align}
We must ensure that $y\geq 0$. This follows from the fact that the value of ${\mathrm{Y} = \wp\left(\tau - \tau_0 +\omega_2\right)}$ lies between the two smallest roots of $4\mathrm{Y}^3 - g_2 \mathrm{Y} - g_3$ \cite{Pastras_2020}. Thus, $y$ lies between the two smallest roots of $4 y \left(y - \mathrm{I}_2\right) \left(y - \mathrm{I}_3\right) + \mathrm{I}_4$, which are non-negative (see Appendix \ref{realRoots}).

We note that $|c|^2$ is periodic with a period of $2\omega_1$, and this is also true for $|a|^2$ and~$|b|^2$, due to the conservation of $\mathsf{h}_2$ and $\mathsf{h}_3$.

\subsection{Lam\'{e} equation and its solution. }

Now we are ready to solve the equations (\ref{cEquation})-(\ref{bEquation}) for $a,b$ and $c$. They read as follows:
\begin{align}
    &\frac{\mathrm{d}^2\,\;}{\mathrm{d}\tau^2}a = \left(\frac{2}{3}\mathrm{I}_2-\frac{1}{3}\mathrm{I}_3+2\wp(\tau - \tau_0 + \omega_2)\right) a\,,\nonumber\\
    &\frac{\mathrm{d}^2\,\;}{\mathrm{d}\tau^2}b = \left(-\frac{1}{3}\mathrm{I}_2+\frac{2}{3}\mathrm{I}_3+2\wp(\tau - \tau_0 + \omega_2)\right) b\,,\label{EqABC}\\
    &\frac{\mathrm{d}^2\,\;}{\mathrm{d}\tau^2}c = \left(-\frac{1}{3}\mathrm{I}_2-\frac{1}{3}\mathrm{I}_3+2\wp(\tau - \tau_0 + \omega_2)\right) c\,.\nonumber
\end{align}
Equations (\ref{EqABC}) are  several copies of the well-known Lam\'{e} equation \cite{Whittaker1996-wv}
\begin{align}
    &\frac{\mathrm{d}^2 w}{\mathrm{d}z^2} = \left(\lambda + n(n+1)\wp(z)\right)w\,,\quad \text{with}\;\,n=1\,.\label{LameEq}
\end{align}
The solutions to (\ref{LameEq}) are \cite{Akhiezer_1990}
\begin{align}\label{w12sols}
    w_{1,2}(z) = e^{\mp z \zeta(\mu)}\frac{\sigma(z\pm \mu)}{\sigma(z)}\,,
\end{align}
where $\mu$ is a root\footnote{In general, there are two roots, $\pm\mu$, corresponding to the two choices of signs in~(\ref{w12sols}).} of the equation $\lambda = \wp(\mu)$ and $\zeta, \sigma$ are the respective Weierstrass functions corresponding to $\wp(z)$. In our case $z = \Delta \tau +\omega_2:= \tau -\tau_0 +\omega_2$. Therefore, it is convenient to use
\begin{align}
    y_{\pm}(\tau | \mu) = e^{\mp\tau \zeta( \mu)}\frac{\sigma(\Delta\tau+\omega_2\pm \mu)\sigma(-\tau_0 + \omega_2)}{\sigma(\Delta\tau+\omega_2)\sigma(-\tau_0 +  \omega_2\pm \mu)}
\end{align}
because $y_{\pm}(0 | \mu)  = 1$. Thus, the solutions for $a,b$ and $c$ take the form
\begin{align}
    &a = \mathcal{A}_{+}\,y_{+}(\tau|\mu_1) + \mathcal{A}_{-}\,y_{-}(\tau|\mu_1)\,,\nonumber\\
    &b = \mathcal{B}_{+}\,y_{+}(\tau|\mu_2) + \mathcal{B}_{-}\,y_{-}(\tau|\mu_2)\,,\label{ABCsolutions}\\
    &c = \mathcal{C}_{+}\,y_{+}(\tau|\mu_3) + \mathcal{C}_{-}\,y_{-}(\tau|\mu_3)\,,\nonumber
\end{align}
where $\wp(\mu_1) = 2\mathrm{I}_2/3-\mathrm{I}_3/3$, $\wp(\mu_2) = -\mathrm{I}_2/3+2\mathrm{I}_3/3$, $\wp(\mu_3) = -\mathrm{I}_2/3-\mathrm{I}_3/3$ and $\mathcal{A}_{\pm},\mathcal{B}_{\pm},\mathcal{C}_{\pm}$ are some constants to be determined. In Appendix \ref{boundSolution} we prove that the solutions are bounded. 

To find the constants, we will use the original equations (\ref{equatiForCIn}) at $\tau=0$ together with the initial conditions for $a,b$ and $c$:
$a_0 = a(0),\,b_0 = b(0)$ and $c_0 = c(0)$.
Using the addition theorem\footnote{It reads $\zeta(u+v)-\zeta(u)-\zeta(v) = \displaystyle\frac{1}{2}\frac{\wp'(u) - \wp'(v)}{\wp(u)-\wp(v)}$, cf.~\cite{Akhiezer_1990}.} for $\zeta$, one easily finds that
\begin{align}
    &\mathcal{A}_{\pm} = \frac{1}{2}\mathrm{M}_{\pm}(\mu_1)a_0 \pm i\,\mathrm{N}(\mu_1) \sqrt{\frac{\alpha_{23}-\alpha_{13}}{\alpha_{12}-\alpha_{23}}}b_0 \Bar{c}_0\,,\nonumber\\
    &\mathcal{B}_{\pm} = \frac{1}{2}\mathrm{M}_{\pm}(\mu_2)b_0 \mp i\,\mathrm{N}(\mu_2) \sqrt{\frac{\alpha_{12}-\alpha_{23}}{\alpha_{23}-\alpha_{13}}}a_0 c_0\,,\label{initialData}\\
    &\mathcal{C}_{\pm} = \frac{1}{2}\mathrm{M}_{\pm}(\mu_3)c_0 \mp i\,\mathrm{N}(\mu_3)\frac{\alpha_{12}-\alpha_{13}}{\sqrt{(\alpha_{12}-\alpha_{23})(\alpha_{23}-\alpha_{13})}}\Bar{a}_0 b_0\,,\nonumber\\
    &\text{where}\quad \mathrm{M}_{\pm}(\mu):=\left(1\pm\frac{\wp'(-\tau_0+\omega_2)}{\wp'(\mu)}\right)\,,\quad \mathrm{N}(\mu):=\frac{\wp(-\tau_0+\omega_2)-\wp(\mu)}{\wp'(\mu)}\,.\nonumber
\end{align}

This completes the solution of equations~(\ref{equatiForCIn}). Using the algorithm of~(\ref{Zformalsol})-(\ref{Vsol}) one can then find the evolution of vectors $u_{1}, u_2, u_3$ parametrizing $\mathcal{F}(3)$.

Summarizing the above, we arrive at the following proposition:
\begin{prop}
    The solution to the Euler equation (\ref{equatiForCIn}) is given by (\ref{ABCsolutions}) and (\ref{initialData}). The respective geodesic can be found using (\ref{Vsol}) and (\ref{geodUsol}).
\end{prop}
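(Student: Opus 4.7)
I would verify the proposition by combining direct substitution into the Lamé equations with an appeal to ODE uniqueness. The key subtlety is that the Euler system (\ref{equatiForCIn}) is first-order whereas the ansatz (\ref{ABCsolutions}) naturally solves only the second-order linear system (\ref{EqABC}); the coefficients in (\ref{initialData}) must therefore be adjusted so that the first-order constraints are enforced at $\tau=0$, from where uniqueness will propagate them to all $\tau$.

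First, I would establish the standard fact that each $y_{\pm}(\tau|\mu)$ satisfies the Lamé equation (\ref{LameEq}) with $n=1$ and $\lambda=\wp(\mu)$. This reduces to a short computation using $(\log\sigma)' = \zeta$, $\zeta' = -\wp$, and the Weierstrass identity $\wp(u)-\wp(\mu) = -\sigma(u+\mu)\sigma(u-\mu)/(\sigma(u)\sigma(\mu))^{2}$. The prefactor is chosen precisely so that $y_{\pm}(0|\mu)=1$. With $\mu_{1},\mu_{2},\mu_{3}$ defined via the three values of $\wp(\mu_{i})$ listed after (\ref{ABCsolutions}), this guarantees that the six trial functions $\mathcal{A}_{\pm}y_{\pm}(\tau|\mu_{1})$, $\mathcal{B}_{\pm}y_{\pm}(\tau|\mu_{2})$, $\mathcal{C}_{\pm}y_{\pm}(\tau|\mu_{3})$ span the solution spaces of the three equations (\ref{EqABC}) once $|c|^{2}=\wp(\tau-\tau_{0}+\omega_{2})+(\mathrm{I}_{2}+\mathrm{I}_{3})/3$ has been substituted in.

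The next step is to pin down the coefficients. Three of the six complex conditions come from $a(0)=a_{0}$, $b(0)=b_{0}$, $c(0)=c_{0}$; the remaining three arise from demanding that the first-order Euler equations hold at $\tau=0$, which expresses $\dot{a}(0)$, $\dot{b}(0)$, $\dot{c}(0)$ algebraically in terms of $a_{0},b_{0},c_{0}$. Computing $\dot{y}_{\pm}(0|\mu)$ produces a combination $\zeta(\mu)\mp[\zeta(-\tau_{0}+\omega_{2}\pm\mu)-\zeta(-\tau_{0}+\omega_{2})]$; applying the $\zeta$ addition theorem recalled in the footnote simplifies this to an expression built out of $\wp'(\mu)$ and $\wp(-\tau_{0}+\omega_{2})-\wp(\mu)$, which are precisely the ingredients of $\mathrm{M}_{\pm}(\mu)$ and $\mathrm{N}(\mu)$ in (\ref{initialData}). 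Solving the resulting linear system for $\mathcal{A}_{\pm},\mathcal{B}_{\pm},\mathcal{C}_{\pm}$ then yields (\ref{initialData}) after straightforward algebra.

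The main obstacle is closing the logical gap between the second-order Lamé system and the original first-order Euler system, since generic solutions of (\ref{EqABC}) need not satisfy (\ref{equatiForCIn}). For this I would invoke uniqueness: the bona fide Euler solution with initial data $(a_{0},b_{0},c_{0})$ also solves (\ref{EqABC}) with the correct initial values and first derivatives, and by construction the proposed $(a,b,c)$ does the same. Since (\ref{EqABC}) is a linear second-order system in $(a,b,c)$ with $|c|^{2}$ treated as a prescribed function, uniqueness of its Cauchy problem forces the two to coincide for all $\tau$. Finally, once $a,b,c$ are known, the matrix $\mathcal{L}(t)$ is determined; its eigendecomposition feeds into (\ref{Vsol}) to yield $\mathsf{V}(t)$, and the geodesic on $\mathcal{F}(3)$ then emerges from $\mathsf{U}(t)=\mathsf{U}(0)\mathsf{V}(t)$ as in (\ref{geodUsol}).
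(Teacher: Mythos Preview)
Your proposal is correct and follows essentially the same approach as the paper, which treats the proposition as a summary of the construction in Sections~4.1--4.3: derive the second-order Lam\'e equations from the first-order Euler system, solve them via (\ref{ABCsolutions}), and fix the constants (\ref{initialData}) by imposing the initial data together with the first-order equations~(\ref{equatiForCIn}) at $\tau=0$. You make explicit the uniqueness argument needed to pass back from the second-order system to the first-order one (using that the bona fide Euler solution and the Lam\'e ansatz satisfy the same linear Cauchy problem once $|c|^2$ is identified with the Weierstrass profile), a point the paper leaves implicit.
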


\subsection{Special solutions.} 
In this section we will describe two types of special solutions. To this end let us study more closely the elliptic curve $\mathrm{Z}^2=4 \mathrm{Y}^3 - g_2 \mathrm{Y} - g_3$ entering the conservation law~(\ref{WeierstrassY}). Recall that earlier we denoted by $e_1, e_2, e_3$ the roots of the cubic polynomial in the r.h.s. of this equation. Solutions of special type arise when the elliptic curve degenerates, i.e. when some of the roots coincide. 

We note that $\wp(\tau-\tau_0+\omega_2) \in [e_3,e_2]$. When $e_2 = e_3$, $\wp(\tau-\tau_0+\omega_2)$ is constant and therefore so is $|c|^2$, as well as $|a|^2, |b|^2$ since they can be expressed through $|c|^2$ via the integrals of motion. 
In this case we have the following algebraic relations between $a,b$ and $c$ (see Appendix \ref{realRoots} for the derivation)
\begin{align}
    &\left(\alpha_{12}-\alpha_{13}\right)|a|^2|b|^2+\left(\alpha_{23}-\alpha_{12}\right)|a|^2|c|^2 + \left(\alpha_{13}-\alpha_{23}\right)|b|^2|c|^2=0\,,\label{conditionSpecial}\\
    &\Bar{a}b\Bar{c}= a\Bar{b}c\,. \label{phaseConOriginal}
\end{align}

\subsubsection{Geodesics in $\CP^1$.} 
One obvious solution is when two of the variables $a,b,c$ are zero. For example, consider $b=c=0$. Then, from (\ref{equatiForCIn}), we find that $a$ is constant. One then easily writes out the solution to the geodesic equation:
\begin{align}
    \mathsf{U}(t)=\mathrm{U}(0)\times \exp\left[-i\begin{pmatrix}
        0 & \alpha_{12}\,a & 0 \\
        \alpha_{12}\,\Bar{a} & 0 & 0 \\
        0 & 0 & 0 
    \end{pmatrix}t\right]\,.
\end{align}
Clearly, this is a geodesic in the fiber $\mathbb{CP}^1$ of the forgetful bundle\footnote{We refer to \cite{Manin1988GaugeFT} for a definition of forgetful bundle and to \cite{Onischik1993} for general information on bundles over homogeneous spaces.} $\mathcal{F}(3) \xrightarrow{\mathbb{CP}^1} \mathbb{CP}^2$, where the projection maps $\mathsf{U} = \begin{pmatrix}
    u_1 & u_2 & u_3
\end{pmatrix}$ to $u_3$.

\subsubsection{Other special solutions.} 
Another solution corresponds to a situation where all of $a,b,c$ are non-zero. Let us look separately at the amplitudes and phases: $a=|a|\,e^{i\psi_a}, b=|b|\,e^{i\psi_b}$ and $c=|c|\,e^{i\psi_c}$. 
Condition~(\ref{conditionSpecial}) then relates $|a|, |b|, |c|$, whereas~(\ref{phaseConOriginal}) leads to
\begin{align}
    &\psi_a - \psi_b + \psi_c \equiv 0 \quad (\text{mod}\;\,\pi)\,.\label{phaseCond}
\end{align}
In this case equations~(\ref{equatiForCIn}) imply that $|a|, |b|, |c|$ are constant (as already discussed above), whereas the phases of $a, b, c$ satisfy the following equations:
\begin{align}
    \dot{\psi}_a = \left(\alpha_{13} - \alpha_{23}\right)\frac{|b||c|}{|a|}\,,\quad 
    \dot{\psi}_b = \left(\alpha_{12} - \alpha_{23}\right)\frac{|a||c|}{|b|}\,,\quad 
    \dot{\psi}_c = \left(\alpha_{12} - \alpha_{13}\right)\frac{|a||b|}{|c|}\,.\label{psieq}
\end{align}
One easily checks that the condition $\dot{\psi}_a - \dot{\psi}_b + \dot{\psi}_c = 0$, which is a consequence of~(\ref{phaseCond}), is satisfied thanks to the constraint~(\ref{conditionSpecial}) on the absolute values of $a, b, c$. It follows from~(\ref{psieq}) that the phases are linear functions of time.

Next, we wish to solve~(\ref{ZeqNewForm}). Using (\ref{phaseCond}), one finds $\psi_b = \psi_a + \psi_c + k\pi$, where $k \in \mathbb{Z}$. This allows rewriting~(\ref{ZeqNewForm}) as 
\begin{align}
    &i\, \frac{\mathrm{d}\,\;}{\mathrm{d}t}\mathsf{Z}= \mathsf{Z}\times\mathcal{D}(t)\,\mathrm{L}\,\mathcal{D}^{-1}(t)\,,\\
    &\text{where}\;\;\mathcal{D}(t):=\begin{pmatrix}
        e^{i\psi_a(t)} & 0 & 0 \\
        0 & 1 & 0\\
        0 & 0 & e^{-i\psi_c(t)}
    \end{pmatrix}\,,\quad \mathrm{L}:=\begin{pmatrix}
        0 & \alpha_{12}|a| & \pm\alpha_{13}|b| \\
        \alpha_{12}|a| & 0 & \alpha_{23}|c| \\
        \pm\alpha_{13}|b| & \alpha_{23}|c| & 0
    \end{pmatrix}\,,\nonumber
\end{align}
and $\pm$ depends on the value of $k$: $\pm=(-1)^{k}$. 
 To solve the equation, we make the change $\mathsf{G} = \mathsf{Z}\mathcal{D}$, which is  a gauge transformation at the same time. Ultimately we arrive at the following equation for $\mathsf{G}$:
\begin{align}
    &i\, \frac{\mathrm{d}\,\;}{\mathrm{d}t}\mathsf{G}=\mathsf{G}\times \mathcal{K}\,,\\
    &\text{where}\;\;
    \mathcal{K}:=
    \begin{pmatrix}
        \left(\alpha_{23} - \alpha_{13}\right)\frac{|b||c|}{|a|} & \alpha_{12}|a| & \pm\alpha_{13}|b| \\
        \alpha_{12}|a| & 0 & \alpha_{23}|c| \\
        \pm\alpha_{13}|b| & \alpha_{23}|c| & \left(\alpha_{12} - \alpha_{13}\right)\frac{|a||b|}{|c|}
    \end{pmatrix}\,. \label{Kmatrix}
\end{align}
Since $\mathcal{K}$ is a constant matrix, this is solved immediately. Upon finding $\mathsf{G}$, we then obtain a geodesic on $\mathcal{F}(3)$ via 
\begin{align}
    \mathsf{U}(t) =  \mathsf{G}\left(\mathsf{G}^{\dagger}\mathsf{G}\right)^{-1/2}=\mathsf{U}(0)\,e^{-i\,\mathcal{K}\,t}\,.
\end{align}
We shall emphasize once again that this is a solution as long as the condition (\ref{conditionSpecial}) holds and the initial values of the phases satisfy $\psi^0_a - \psi^0_b + \psi^0_c \equiv 0 $ $ (\text{mod}\;\,\pi)$. Besides, notice that the solutions with different choices of signs in~(\ref{Kmatrix}) are not essentially different. Indeed, one can obtain one from the other by the gauge transformation  $\mathsf{G}\mapsto \mathsf{G}\cdot \mathrm{Diag}(-1,1,-1)$ with a simultaneous swap $t\mapsto -t$. Thus, they  correspond to the same geodesic running in opposite directions. 

The meaning of this solution is that, even in the case of the most general metric, a family of geodesics that are orbits of one-parametric subgroups in $\mathrm{U}(3)$ still exists. Notice, however, that the subgroup is deformed compared to the case of the normal metric, where $\alpha_{12}=\alpha_{13}=\alpha_{23}$.

\section{Spectrum of the Laplacian on $\mathcal{F}(3)$}\label{F3specsec}
In this section we will study the spectrum of the Laplace-Beltrami operator on~$\mathcal{F}(3)$. It follows from  Proposition \ref{spinFlagConnection}  that the spectrum for the metric (\ref{F3metric}) can be recovered in the limit $p\to \infty$ as a spectrum of the spin chain Hamiltonian
\begin{align}\label{spinHam}
    \mathcal{H} = \alpha_{12} \mathrm{S}_{1}^a \mathrm{S}_{2}^a+\alpha_{13} \mathrm{S}_{1}^a \mathrm{S}_{3}^a+\alpha_{23} \mathrm{S}_{2}^a \mathrm{S}_{3}^a +\mathrm{const}\,.
\end{align}
As we noted in Section \ref{geodSection}, the Hamiltonian (\ref{spinHam}) is a linear combination of Gaudin Hamiltonians. Thus, its spectrum can be obtained using the Bethe ansatz method~\cite{Feigin1994-ag}. Here we will carry out the relevant calculation.

Before proceeding to the calculation, let us discuss some general properties of the spin chain we are dealing with. First, the appropriate Gaudin Hamiltonians can be easily derived from the classical integrals of motion~(\ref{integralsOfMotion}), using the correspondence\footnote{This follows from the Schwinger-Wigner quantization scheme \cite{Affleck_2022}, which we will recall shortly.} between classical and quantum observables ${|\Bar{z}_1\cdot z_2|^2=|a|^2 \mapsto ~\mathrm{S}_1^a \mathrm{S}_2^a,}$\,${|\Bar{z}_1\cdot z_3|^2=|b|^2 \mapsto \mathrm{S}_1^a \mathrm{S}_3^a,}$\,$|\Bar{z}_2\cdot z_3|^2=|c|^2 \mapsto \mathrm{S}_2^a \mathrm{S}_3^a$ :
\begin{align}
    -\frac{\mathsf{h}_1}{(\alpha_{23}-\alpha_{12})(\alpha_{23} - \alpha_{13})}\;\mapsto &\;\;\mathrm{H}_1=\frac{\mathrm{S}_1^a \mathrm{S}_2^a}{\alpha_{23} - \alpha_{13}} + \frac{\mathrm{S}_1^a \mathrm{S}_3^a}{\alpha_{23} - \alpha_{12}}\,, \label{GaudinHam1}\\
    -\frac{\mathsf{h}_2}{(\alpha_{13}-\alpha_{12})(\alpha_{13} - \alpha_{23})}\;\mapsto &\;\;\mathrm{H}_2 = \frac{\mathrm{S}_1^a \mathrm{S}_2^a}{\alpha_{13} - \alpha_{23}} + \frac{\mathrm{S}_2^a \mathrm{S}_3^a}{\alpha_{13}-\alpha_{12}}\,,\label{GaudinHam2}\\
    \frac{\mathsf{h}_3}{(\alpha_{12}-\alpha_{13})(\alpha_{12} - \alpha_{23})}\;\mapsto&\;\;\mathrm{H}_3 =\frac{\mathrm{S}_1^a \mathrm{S}_3^a}{\alpha_{12}-\alpha_{23}}+\frac{\mathrm{S}_2^a \mathrm{S}_3^a}{\alpha_{12}-\alpha_{13}}\,.\label{GaudinHam3}
\end{align}
One therefore reads off the Gaudin parameters (see (\ref{gaudinHam})) as
\bea
z_1 = \alpha_{23},\quad z_2 =\alpha_{13}, \quad z_3 = \alpha_{12}\,.
\eea
Using the $\mathrm{H}_i$'s, one may write the spin Hamiltonian (\ref{spinHam}) as follows\footnote{The parameters $t_i$ entering~(\ref{tham}) can be easily read off from this formula using the fact that $\mathcal{C}_2=2\left(\alpha_{23} \mathrm{H}_1+\alpha_{13} \mathrm{H}_2+\alpha_{12} \mathrm{H}_3\right)+\mathrm{const.}$}: 
\begin{align}
    \mathcal{H} = \frac{\alpha_{12}+\alpha_{23}+\alpha_{13}}{2}\mathcal{C}_2 - \left(\alpha^2_{23}\mathrm{H}_1 + \alpha^2_{13}\mathrm{H}_2 + \alpha^2_{12}\mathrm{H}_3\right)+\mathrm{const}\,,\label{fullHam}
\end{align}
where $\mathcal{C}_2 = (\mathrm{S}_1^a + \mathrm{S}_2^a + \mathrm{S}_3^a)^2$ is a quadratic Casimir operator 
acting in $\mathrm{Sym}(p)^{\otimes 3}$. We shall also choose the additive constant by the requirement that the eigenvalue of $\mathcal{H}$, corresponding to the trivial representation in $\mathrm{Hilb}(p)$, is zero. This is also the ground state energy of the Hamiltonian corresponding to constant functions on $\mathcal{F}(3)$.

Now, let us turn to the study of the Hilbert space of the spin chain. As we know from Proposition \ref{spinFlagConnection}, $\lim_{p\to\infty}\mathrm{Hilb}(p) = \mathrm{L}^2(\mathcal{F}(3))$ in terms of $\mathrm{SU}(n)$ representations. In fact, the Hilbert spaces corresponding to different values of $p$ are simply embedded in each other:
\bea
\mathrm{Hilb}(p)~\hookrightarrow~\mathrm{Hilb}(p+1)\,. \label{Hilbembed}
\eea
Let us describe the embedding in terms of Young diagrams corresponding to the representations in $\mathrm{Hilb}(p)$ and $\mathrm{Hilb}(p+1)$. The embedding~(\ref{Hilbembed}) maps a diagram from $\mathrm{Hilb}(p)$ to a diagram with an additional column of three boxes in $\mathrm{Hilb}(p+1)$. These two diagrams correspond to the same irreducible representation of $\mathrm{SU}(3)$. Summarizing, we conclude that $\mathrm{Hilb}(p) \hookrightarrow \mathrm{L}^2(\mathcal{F}(3))$.

\subsection{Functions on $\mathcal{F}(3)$ and  explicit computation of the spectrum.}\label{compSpec}
In this section we will explicitly describe the embedding of $\mathrm{Hilb}(p)$ into $\mathrm{L}^2(\mathcal{F}(3))$ and show how it can be used to find the eigenvalues of $\mathcal{H}$, first by a direct computer calculation.

We recall that the spin chain can be realized in terms of the so-called Schwinger-Wigner oscillators~\cite{Affleck_2022}. To this end, we define the set of usual bosonic creation-annihilation operators $\{a_m,a_m^{\dagger}, b_m,b_m^{\dagger},c_m,c_m^{\dagger}\}_{m=1}^3$ (each letter for one of the $\mathrm{Sym}(p)$ factors in $\mathrm{Hilb}(p)$, see (\ref{Hilbp})) with the following non-trivial commutation relations
\begin{align}
    \left[a_m, a^{\dagger}_l\right] = \left[b_m, b^{\dagger}_l\right] = \left[c_m, c^{\dagger}_l\right] =\delta_{ml}\,.
\end{align}
We also define the Fock space vacuum vector $\ket{0}$ such that $a_m\ket{0}=b_m\ket{0}=c_m\ket{0}=0$. The operators $\mathrm{S}^{a}_i$ are realized as 
\begin{align}\label{generatorsRep}
    \mathrm{S}^{a}_1 = a_i^{\dagger}\tau^a_{ij}a_j, \quad \mathrm{S}^{a}_2 = b_i^{\dagger}\tau^a_{ij}b_j, \quad \mathrm{S}^{a}_3 = c_i^{\dagger}\tau^a_{ij}c_j,
\end{align}
where $\tau^a$ are $\mathrm{SU}(3)$ generators in the defining representation. In these terms, $\mathrm{Hilb}(p)$ is the space of homogeneous polynomials in $a^{\dagger}, b^{\dagger}, c^{\dagger}$ of poly-degree $(p,p,p)$ acting on the vacuum vector $\ket{0}$. 

Now we are in a position to explicitly describe the embedding $\mathrm{Hilb}(p)\hookrightarrow \mathrm{L}^2(\mathcal{F}(3))$. Let us take a state $\mathrm{Q}(a^{\dagger},b^{\dagger},c^{\dagger})\ket{0} \in \mathrm{Hilb}(p)$, where $\mathrm{Q}$ is a homogeneous polynomial. We map this state to the function
\begin{align}
    f(x,y,z) = \frac{\mathrm{Q}(x,y,z)}{\langle x, y, z\rangle^p}\,,
\end{align}
where $x,y,z \in \mathbb{CP}^2$, $\langle x, y, z\rangle = \epsilon_{ijk}x^{i}y^{j}z^{k}$, $\epsilon_{ijk}$ is a totally antisymmetric tensor and $\epsilon_{123}=1$. This is a well-defined function\footnote{More exactly, this is true outside the divisor where the denominator vanishes, i.e. we assume $\langle x, y, z\rangle\neq 0$.} on $(\mathbb{CP}^2)^{\times 3}$ because $\mathrm{Q}$ is a homogeneous polynomial of a proper poly-degree. Finally, to get a function on the flag manifold we restrict $f(x,y,z)$ to $\mathcal{F}(3) \hookrightarrow (\mathbb{CP}^2)^{\times 3}$ -- the embedding is defined by the requirement that $x,y,z$ are three orthogonal lines in $\mathbb{C}^3$. 
We note that the map $\mathrm{Q}\ket{0}\mapsto f|_{\mathcal{F}(3)}$ is injective, since $\mathrm{Q}$ is a polynomial, and if it vanishes on a continuous set, such as $\mathcal{F}(3)$, then it must be identically zero. In these terms we can as well describe the embedding $\mathrm{Hilb}(p) \hookrightarrow \mathrm{Hilb}(p+1)$ of~(\ref{Hilbembed}): one simply needs to multiply $\mathrm{Q}\ket{0} \in \mathrm{Hilb}(p)$ by $\langle a^{\dagger},b^{\dagger},c^{\dagger}\rangle$ to find the image of the state in $\mathrm{Hilb}(p+1)$. 

Next, using (\ref{generatorsRep}), one finds that 
\begin{align}
    \mathrm{S}^{a}_1 \mathrm{S}^{a}_2 = \left(a_i^{\dagger} b_i\right) \left(b_j^{\dagger} a_j \right) + \mathrm{Const}(p)\,.
\end{align}
Here we have made use of the identity $\sum_{a=1}^{8} \tau^a\otimes \tau^a = \mathrm{P}-\frac{1}{3}\mathrm{Id}$, where $\mathrm{P}$ is a permutation, and $\mathrm{Const}(p)$ is a constant that depends on $p$. For the moment we are not paying much attention to constant terms since there is an overall additive constant in the definition of $\mathcal{H}$. Analogous equalities hold for $\mathrm{S}^{a}_1 \mathrm{S}^{a}_3$ and $\mathrm{S}^{a}_2 \mathrm{S}^{a}_3$.  One easily verifies that 
\begin{align}
    \left[\mathrm{S}^{a}_i \mathrm{S}^{a}_j - \mathrm{Const}(p), \langle a^{\dagger},b^{\dagger},c^{\dagger}\rangle\right] = 0\,. \label{spinspinoscillators}
\end{align}
Therefore, following the explicit description of the embedding $\mathrm{Hilb}(p) \hookrightarrow \mathrm{Hilb}(p+1)$, we conclude that $\mathrm{Spec}(p)$, the spectrum of $\mathcal{H}$ in $\mathrm{Hilb}(p)$,  is contained in $\mathrm{Spec}(p+1)$. This is an additional illustration of Proposition \ref{spinFlagConnection} that explains why the limit $\lim_{p\to \infty} \mathrm{Spec}(p)$ exists.  

The above representation~(\ref{spinspinoscillators}) in terms of the creation-annihilation operators is also useful for the analysis of the full Hamiltonian~(\ref{spinHam}). Our requirement throughout this paper is that the additive constant in~(\ref{spinHam}) should be chosen in such a way that the ground state eigenvalue is zero. The so-normalized Hamiltonian should take the following form:
\bea\label{HamABC}
\mathcal{H}=\alpha_{12}\,\left(a_i^{\dagger} b_i\right) \left(b_j^{\dagger} a_j \right)+\alpha_{23} \left(b_i^{\dagger} c_i\right) \left(c_j^{\dagger} b_j \right)+\alpha_{13} \left(a_i^{\dagger} c_i\right) \left(c_j^{\dagger} a_j \right)
\eea
First of all, it is easy to see that this Hamiltonian is non-negative: $\mathcal{H}\geq 0$. Besides, the zero-energy ground state is easily constructed:
\bea
|\Psi_0\rangle=\left(\langle a^{\dagger},b^{\dagger},c^{\dagger}\rangle\right)^p |0\rangle \,,\quad\quad \mathcal{H}|\Psi_0\rangle=0\,.
\eea
Certain other eigenstates can also be found almost immediately. Indeed, consider the state
\bea
|\Psi_{3p}\rangle=\Psi_{i_1\cdots i_p\,j_1 \cdots j_p\,k_1\cdots k_p} a_{i_1}^\dagger\cdots a_{i_p}^\dagger\,b_{j_1}^\dagger\cdots b_{j_p}^\dagger\,c_{k_1}^\dagger\cdots c_{k_p}^\dagger |0\rangle 
\eea
where the wave function $\Psi_{\cdots}$ is \emph{fully symmetric} w.r.t. all indices. Clearly, this corresponds to the representation
\bea
\underbrace{
        \begin{ytableau}
               *(White)~ & *(White)~ & *(White)~ & *(White)~&*(White)~&*(White)~
        \end{ytableau}
        }_{3p}
\eea
Now, acting on the state $|\Psi_{3p}\rangle$ by the Hamiltonian~(\ref{HamABC}), one readily finds the eigenvalue:
\bea\label{3prepeigenvalue}
\mathcal{H}|\Psi_{3p}\rangle=\left(\alpha_{12}+\alpha_{23}+\alpha_{13}\right)\,p(p+1)|\Psi_{3p}\rangle
\eea

We also note that $\mathrm{S}^a_i \mathrm{S}_j^a$ commutes with the diagonal action of $\mathrm{SU}(3)$ in $\mathrm{Hilb}(p)$, whose generators have the form $\mathrm{S}^a = \mathrm{S}^a_1+\mathrm{S}^a_2+\mathrm{S}_3^a$. Therefore, the Gaudin Hamiltonians $\mathrm{H}_i$'s and, consequently, $\mathcal{H}$ are constant on irreducible representations of $\mathrm{SU}(3)$ in $\mathrm{Hilb}(p)$. The problem with diagonalization only arises for those irreducible representations which appear  in $\mathrm{Hilb}(p)$ with multiplicity. For example, consider $p=1$, where one has the following decomposition into irreducibles 
\begin{align}
    \mathrm{Hilb}(1) 
        =
        \begin{ytableau}
               ~ & ~ & ~
        \end{ytableau}
        \oplus
        2\;
        \begin{ytableau}
               ~ & ~ \\
               ~ &\none 
        \end{ytableau}
        \oplus
        \begin{ytableau}
               ~ \\
              ~ \\
               ~
        \end{ytableau}\,.
\end{align}
As we see, the irreducible representation $\tiny{\begin{ytableau}
               ~ & ~ \\
               ~ &\none 
        \end{ytableau}}$ appears twice.

\vspace{0.1cm}
        The characteristic polynomial of $\mathcal{H}$ factors into a product of polynomials for each irreducible representation in $\mathrm{Irr}\left(\mathrm{Hilb}(p)\right)$ (the set of all irreducible representations in $\mathrm{Hilb}(p)$), the degree of each factor being equal to the multiplicity of that representation. Each polynomial is exponentiated to a power equal to  the dimension of the corresponding representation. Our main goal is to find these polynomials.

Let us analyze what irreducible representations appear in $\mathrm{Hilb}(p)$ and, consequently, in $\mathrm{L}^2\left(\mathcal{F}(3)\right)$. For convenience, we introduce a special notation for irreducible representations, namely we refer to them as type $(n, m)$ if they have the following Young diagram: 
\begin{align}\label{nmdiagram}
    \begin{overpic}[scale=0.2,unit=1mm]{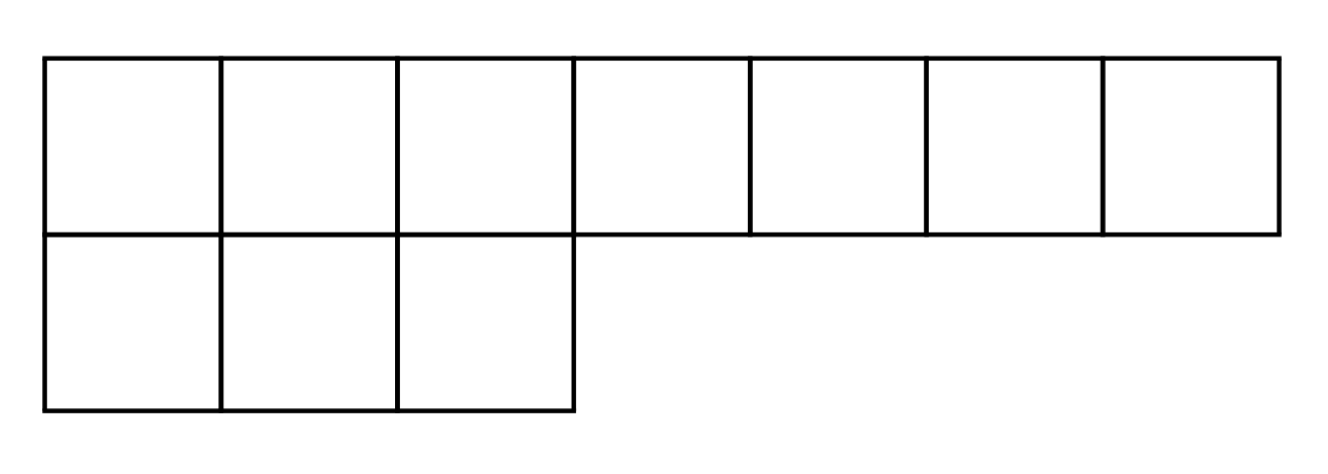}
    \put(20.5,6){$\xleftrightarrow{\makebox[2.2cm]{}}$}
    \put(32,4){$n$}
    \put(2,-0.5){$\xleftrightarrow{\makebox[1.6cm]{}}$}
    \put(9,-2.5){$m$}
    \end{overpic}
\end{align}
In general, a Young diagram may contain columns of three boxes, but these do not affect the irreducible representation type because they correspond to the trivial representation of $\mathrm{SU}(3)$\footnote{We refer to \cite{fulton1991representation} for general information on the correspondence between irreducible representations and Young diagrams.}. An important property is that each representation will typically enter with a certain multiplicity that may be different from zero or one, since $\mathcal{F}(3)$ is not a multiplicity free space. This is what makes it different from the case of symmetric spaces, where each representation  appears with multiplicity at most one (cf.~\cite{Kramer1979, GuilleminSternberg}).  

As we noticed above, $\mathrm{Hilb}(p)\hookrightarrow \mathrm{Hilb}(p+1)$. Therefore, to find irreducible representations, at each value of $p$ it suffices to look for those that are in $\mathrm{Hilb}(p+1) \backslash \mathrm{Hilb}(p)$\footnote{$\mathrm{Hilb}(0)$ contains only the trivial representation.}. In fact, they have Young diagrams with two rows. If not, then the Young diagram contains columns of three boxes. One can simply erase them and find that the remaining diagram is in $\mathrm{Hilb}(p)$, which follows from the multiplication rule for Young diagrams. Thus, we arrive at the following description of  Young diagrams in $\mathrm{Hilb}(p+1) \backslash \mathrm{Hilb}(p)$:
\bea\label{nmpstates}
(n,m)\in \mathrm{Hilb}(p+1) \Big\backslash \mathrm{Hilb}(p)\quad  \Longleftrightarrow \quad   n+2m=3(p+1)\,.
\eea
Let us illustrate the way these diagrams arise in $\mathrm{Hilb}(p+1)$:
\begin{align}
    \mathrm{Hilb}(p+1) &:= \underbrace{
        \begin{ytableau}
               *(BurntOrange)~ & *(BurntOrange)~ & *(BurntOrange)~ & *(BurntOrange)~
        \end{ytableau}
        }_{p+1} \otimes \underbrace{
        \begin{ytableau}
               *(Emerald)~ & *(Emerald)~ & *(Emerald)~ & *(Emerald)~
        \end{ytableau}
        }_{p+1} \otimes \underbrace{
        \begin{ytableau}
               *(RedViolet)~ & *(RedViolet)~ & *(RedViolet)~ & *(RedViolet)~
        \end{ytableau}
        }_{p+1} = \nonumber\\
        &= \left(\bigoplus_{l=0}^{p+1} \underset{\xleftrightarrow{\;\quad l\quad\;}\qquad\;\qquad\qquad}{\begin{ytableau}
            *(BurntOrange)~ & *(BurntOrange)~ & *(BurntOrange)~ & *(BurntOrange)~ & *(Emerald)~ & *(Emerald)~  \\
           *(Emerald)~ & *(Emerald)~ & \none & \none & \none & \none
        \end{ytableau}} \right)\otimes \begin{ytableau}
               *(RedViolet)~ & *(RedViolet)~ & *(RedViolet)~ & *(RedViolet)~
        \end{ytableau} = \label{YoungTables}\\
        &= \bigoplus_{l=0}^{p+1}\;\bigoplus_{k=0}^{\mathrm{min}(2(p+1)-2l;p+1)} \underset{\xleftrightarrow{\;\quad l\quad\;}\xleftrightarrow{\,\;\;\;\;\quad k\quad\;\;\;\,}\qquad\;\;\;\;\,\,}{\begin{ytableau}
            *(BurntOrange)~ & *(BurntOrange)~ & *(BurntOrange)~ & *(BurntOrange)~ & *(Emerald)~ & *(Emerald)~ & *(RedViolet)~ \\
            *(Emerald)~ & *(Emerald)~ & *(RedViolet)~ & *(RedViolet)~ & *(RedViolet)~ & \none & \none
        \end{ytableau}}\;\; \oplus \dots\nonumber\;
\end{align}
The ellipsis stands for irreducible representations which appear in $\mathrm{Hilb}(p)$. Columns with boxes of the same color are prohibited, which is why we have added a restriction on the range of $k$. We learn that all representations with $n+2m=3(p+1)$ for $n,m \in \mathbb{N}_0$ are allowed.

\vspace{1cm}
\begin{lem}
    The multiplicity of the $(n, m)$ representation is $\mathrm{min}(m,n)+1$.
\end{lem}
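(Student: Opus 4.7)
The strategy is to read off the multiplicity directly from the double sum already displayed in equation (\ref{YoungTables}). In that formula the first step $\mathrm{Sym}(p+1)^{\otimes 2} = \bigoplus_{l=0}^{p+1}$ produces two-row irreducibles with row lengths $(2(p+1) - l,\, l)$. Tensoring with the third $\mathrm{Sym}(p+1)$ via Pieri's rule adds a horizontal strip of $p+1$ boxes, and (\ref{YoungTables}) explicitly displays the two-row summands parameterised by $k$, the number of new (purple) boxes placed in the second row, with $0 \leq k \leq \min(2(p+1) - 2l,\, p+1)$. The row lengths of the resulting diagram are $(2(p+1) - l + (p+1-k),\, l + k)$, so the irreducible is of type $n = 3(p+1) - 2l - 2k$, $m = l+k$, automatically satisfying $n + 2m = 3(p+1)$, in agreement with (\ref{nmpstates}).

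Consequently, the multiplicity of a fixed $(n,m)$ with $n+2m = 3(p+1)$ is the number of integer pairs $(l,k)$ with $l + k = m$ subject to $0 \leq l \leq p+1$ and $0 \leq k \leq \min(2(p+1) - 2l,\, p+1)$. Eliminating $k = m - l$ reduces the problem to counting integers in the interval
\begin{equation*}
\max(0,\, m - p - 1) \leq l \leq \min(m,\, p+1,\, 2(p+1) - m).
\end{equation*}

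A two-case analysis then finishes the count. Using $n+2m=3(p+1)$, the condition $m\leq n$ is equivalent to $m\leq p+1$: in this regime the interval reduces to $[0,m]$, giving $m+1$ solutions. Symmetrically, $m\geq n$ is equivalent to $m\geq p+1$: the interval becomes $[m-(p+1),\,2(p+1)-m]$, with $3(p+1) - 2m + 1 = n+1$ solutions. In both cases the count equals $\min(m,n)+1$, proving the lemma.

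There is no substantial obstacle here; the argument is a short combinatorial computation once (\ref{YoungTables}) is in hand. The only point requiring mild care is confirming that the range $k \leq \min(2(p+1)-2l,\,p+1)$ quoted in (\ref{YoungTables}) correctly encodes \emph{both} the horizontal-strip condition on the purple boxes and the prohibition of three same-coloured boxes in a column — so that the summands enumerated really are all the two-row representations in $\mathrm{Hilb}(p+1)\setminus\mathrm{Hilb}(p)$, and nothing further need be subtracted.
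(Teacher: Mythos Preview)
Your proof is correct and follows essentially the same approach as the paper: both count the admissible colorings in~(\ref{YoungTables}) and split into the cases $m\le n$ and $m\ge n$. Your version is somewhat more explicit, writing the constraints as an integer interval and computing its length directly, whereas the paper fixes one extremal coloring and generates the others by box exchanges; the underlying combinatorics is the same.
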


\begin{proof}
    Notice that the multiplicity is equal to the number of  ways to color (in purple and green, see (\ref{YoungTables})) the second row of the $(n,m)$ diagram. Suppose\footnote{The case $n\geq m$ is considered in complete analogy.} $m\geq n$, then in (\ref{YoungTables}) there is a coloring of the $(n,m)$ diagram without purple boxes in the first row. Indeed, it follows from $n+2m=3(p+1)$ and $m\geq n$ that $m\geq p+1$, so that there are enough slots in the second row that can be filled with all of the purple boxes.  
    Other colorings of $(n,m)$ can be obtained from this  diagram by exchanging part of the purple boxes in the second row with the green boxes in the first one, under the condition that blocks of the same color do not appear in the same column (this is one of the multiplication rules for Young diagrams). Thus, we obtain $n+1$ different colorings. 
\end{proof}

We combine all of the information about irreducible representations in $\mathrm{L}^2(\mathcal{F}(3))$ in a single  table\footnote{In fact, the decomposition into irreducibles (except for multiplicities) for $\mathrm{L}^2(\mathcal{F}(3))$ was found in~\cite{Yamaguchi}.}:
\begin{figure}[H]
    \centering
    \includegraphics[width=0.5\textwidth]{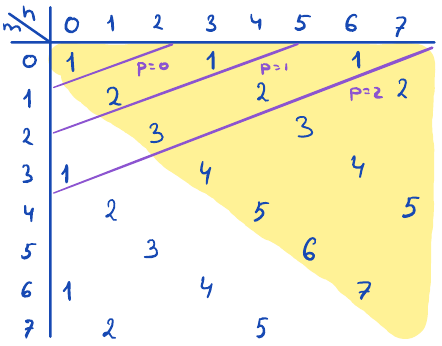}
    \caption{Decomposition of $\mathrm{L}^2(\mathcal{F}(3))$  into irreducible representations. Integers inside the table stand for multiplicities of the corresponding representations.}
    \label{table1}
\end{figure}
Here, in each cell, we have specified the multiplicity of the irreducible representation of type $(n, m)$ (the multiplicity is zero if not specified). We have separated the representations that fall into $\mathrm{Hilb}(0)$, $\mathrm{Hilb}(1)$, and $\mathrm{Hilb}(2)$, using purple lines. In general, the representations that are contained in $\mathrm{Hilb}(p)$ are those for which $n+2m = 3q$ for $q=0,1,\dots,p$. The yellow shading indicates representations with $n\geq m$. 

We see that $\lim_{p\to \infty}\mathrm{Hilb}(p)$ contains each representation $(n,m)$ along with its conjugate  $(m,n)$ with the same multiplicity. The Laplace-Beltrami operator is real, so that if $\varphi$ is a (complex) eigenfunction transforming in a given representation $(n,m)$, then $\varphi^\ast$ is an eigenfunction in the dual representation $(m,n)$, moreover the eigenvalues are the same. From now on, we will only study the spectrum for representations with $n\geq m$.

Let us now determine explicitly the spectrum $\mathrm{Spec}(p)$ for small values of $p$, which is thus part of the spectrum for a free particle on the space $\mathcal{F}(3)$ equipped with the metric (\ref{F3metric}). To this end it will be useful to introduce the elementary $\mathrm{S}_3$ symmetric polynomials of $\alpha_{12}, \alpha_{23}, \alpha_{13}$:
\bea\label{s1s2s3}
s_{1} := \alpha_{12} + \alpha_{23} + \alpha_{13},\quad  s_2:= \alpha_{12}\alpha_{13} + \alpha_{12}\alpha_{23} + \alpha_{13}\alpha_{23},\quad  s_3 := \alpha_{12}\alpha_{23}\alpha_{13}\,.
\eea

To calculate $\mathrm{Spec}(p)$, we use the bosonic representation of the spin system described above. In the case of $p=1$ and $p=2$, we find \cite{OscCalculusCoadj}:

\begin{align*}
&&& \textrm{Eigenvalue}\, &&&&\textrm{Representation} && \textrm{Dimension}\nonumber\\
    & p = 1\,:\nonumber\\
    &&&\lambda = 0 &&&&Trivial && 1 \nonumber \\
    &&&\lambda - 2 s_1 =0 &&&&{\tiny \begin{ytableau}
        ~ & ~ & ~
    \end{ytableau}} && 10 \nonumber \\
    &&&\lambda^2 - 2 s_1 \lambda + 3s_2 = 0  &&&&{\tiny \begin{ytableau}
        ~ & ~ \\
        ~ 
    \end{ytableau}} && 8 \\
    & p=2\,:\nonumber\\
    &&&\lambda = 0  &&&&Trivial && 1 \nonumber \\
    &&&\lambda - 2 s_1 =0  &&&&{\tiny \begin{ytableau}
        ~ & ~ & ~
    \end{ytableau}} && 10 \nonumber \\
    &&&\lambda - 2 s_1 =0  &&&&{\tiny \begin{ytableau}
        ~ & ~ & ~\\
        ~ & ~ & ~
    \end{ytableau}} && 10 \nonumber \\
    &&&\lambda - 6 s_1 =0  &&&&{\tiny \begin{ytableau}
        ~ & ~ & ~& ~& ~& ~
    \end{ytableau}} && 28 \nonumber \\
    &&&\lambda^2 - 2 s_1 \lambda + 3s_2 = 0  &&&&{\tiny \begin{ytableau}
        ~ & ~ \\
        ~ 
    \end{ytableau}} && 8 \nonumber \\
    &&&\lambda^2 - 8 s_1 \lambda + 12\left(s_1^2 + s_2\right) = 0  &&&&{\tiny \begin{ytableau}
        ~ & ~& ~ &~ & ~ \\
        ~ 
    \end{ytableau}} && 35 \nonumber \\
    &&&\lambda^3 - 8 s_1 \lambda^2 + \left(12 s_1^2 + 28 s_2\right)\lambda - 48 s_1s_2 - 80 s_3 = 0 &&&&{\tiny \begin{ytableau}
        ~ & ~& ~ &~  \\
        ~ & ~
    \end{ytableau}} && 27 \nonumber
\end{align*}
Using these explicit expressions, one can check that the above remarks about properties of the spectrum  hold true.

\subsection{
Bethe ansatz and Heun polynomials.} 
In this section, we will calculate the spectrum of the Laplace-Beltrami operator for an arbitrary left-invariant metric~(\ref{F3metric}) on $\mathcal{F}(3)$. To this end, we first need to recall that, according to Proposition~\ref{spinFlagConnection}, $\lim_{p\to \infty}\mathrm{Spec}(p)$ is precisely the spectrum of the Laplace-Beltrami operator. Thus, we only need to study the spectrum of the $\mathrm{SU}(3)$ spin chain, which can be computed via the Bethe ansatz. 

Let us now recall the basic formulas of the Bethe ansatz method\footnote{We refer to \cite{Feigin1994-ag,Mukhin2007-rl} for a general introduction and details and to \cite{AMBURG2025105436} for geometric perspective on $\mathfrak{sl}(2,\mathbb{C})$ Bethe ansatz.} in the case of the $\mathrm{SU}(3)$ spin chain we are dealing with. First, we fix the notation: $\omega_1,\omega_2$ are the two fundamental weights of $\mathrm{SU}(3)$ and $\alpha_1 =2 \omega_1 - \omega_2,\,\alpha_2 = -\omega_1 + 2\omega_2$ are the simple roots \cite{fulton1991representation}. The highest weight of $\mathrm{Sym}(p)$ is $p\,\omega_1$. Recall that, in Bethe ansatz, this is also the ferromagnetic `ground state' (often referred to as the pseudo vacuum) corresponding to zero Bethe roots. Each irreducible representation $\mathcal{R} \in \mathrm{Irr}(\mathrm{Hilb}(p))$ has a highest weight of the form $3p\,\omega_1 - \ell\alpha_1\,- \ell'\alpha_2 = n\,\omega_1 + m\,\omega_2$, where $\ell, \ell',n,m \in \mathbb{N}_{0}$. Here $\ell$ and $\ell'$ are the numbers of Bethe roots corresponding to this representation, whereas $n$ and $m$ are related to the Young diagram of $\mathcal{R}$, representing the number of one-box and two-box columns in the diagram, respectively (see~(\ref{nmdiagram})). These are related as follows: 
\bea
\ell = {1\over 3}\left(6p-(2n+m)\right)\,, \quad\quad \ell' = {1\over 3}\left(3p-(n+2m)\right)\,.
\eea
In particular, according to~(\ref{nmpstates}) the states in $\mathrm{Hilb}(p+1)\backslash \mathrm{Hilb}(p)$ correspond to $\ell'=0$ (and $\ell=m$).

In order to calculate the eigenvalues of the Gaudin Hamiltonians (\ref{GaudinHam1})-(\ref{GaudinHam3}), one should first solve the Bethe equations
\begin{align}
    & p\left(\frac{1}{t_a - \alpha_{23}} + \frac{1}{t_a - \alpha_{13}}+\frac{1}{t_a - \alpha_{12}}\right) - 2\sum_{\substack{b=1\\b\neq a}}^{\ell}\frac{1}{t_a - t_b} + \sum^{\ell'}_{\alpha=1}\frac{1}{t_a - k_\alpha} = 0\,,\label{BetheEq}\\
    &- 2\sum_{\substack{\beta=1\\\beta\neq \alpha}}^{\ell'}\frac{1}{k_\alpha - k_\beta} + \sum^{\ell}_{b=1}\frac{1}{k_\alpha-t_b} = 0\,,\nonumber\\
    &\text{for} \quad a = 1,2,\dots,\ell\,;\quad \alpha =1,2,\dots,\ell' \,,\nonumber
\end{align}
where $t_a, k_\alpha$ are the  Bethe roots. Clearly, they are pairwise distinct (or else the equations do not make sense). Note that when $\ell' =0$ one arrives at the Bethe equations of the $\mathrm{SU}(2)$ spin chain. Then the eigenvalues $\mu_i$ of the  Gaudin Hamiltonians $\mathrm{H}_i$ (up to constant terms) are expressed as
\begin{align}
    \mu_i = p\sum_{a=1}^{\ell}\frac{1}{t_a-z_i}\,\;\;\text{for}\;\;i=1,2,3\,,
\end{align}
where $z_1 = \alpha_{23}, z_2 = \alpha_{13}$ and $z_3 = \alpha_{12}$ are the Gaudin parameters. One can check that $\mu_1+\mu_2+\mu_3=0$.

The Bethe equations (\ref{BetheEq}) are quite complex and provide a lot more information than is needed to determine the spectrum. Thus, we will use an equivalent reformulation \cite{Mukhin2007-rl} (see also \cite{BABUJIAN_1994}) based on a classic result of Stieltjes \cite{Stieltjes_1885} (see also \cite[Section $6.8$]{szego1939orthogonal}). Let us recall the reformulation in the context of our spin chain. For given Bethe roots $t_a$'s and $k_\alpha$'s, one defines the polynomials 
\bea
\mathcal{P}_1(x) := \prod_{a=1}^{\ell}(t_a\,-\,x)\quad \textrm{and}\quad \mathcal{P}_2(x) := \prod_{\alpha=1}^{\ell'}(k_\alpha\,-\,x)\,.
\eea
As a consequence of the Bethe equations~(\ref{BetheEq}), these  satisfy in turn the following differential equation\footnote{Compared to the original article \cite{Mukhin2007-rl}, we have shifted the $\mu_i$'s.} 
\begin{align}
    \mathcal{P}_1''\mathcal{P}_2 - \mathcal{P}_1'\mathcal{P}_2' + \mathcal{P}_1\mathcal{P}_2'' - p\,&\mathcal{P}_1'\mathcal{P}_2\left(\frac{1}{x-\alpha_{23}}+\frac{1}{x-\alpha_{13}}+\frac{1}{x-\alpha_{12}}\right)+\nonumber\\
    -&\mathcal{P}_1\mathcal{P}_2\left(\frac{\mu_1}{x-\alpha_{23}}+\frac{\mu_2}{x-\alpha_{13}}+\frac{\mu_3}{x-\alpha_{12}}\right)=0\,,\label{BetheForSU3}
\end{align}
Conversely, if one can find the $\mu_i$'s such that $\mu_1+\mu_2+\mu_3=0$ and the solutions $\mathcal{P}_1$ and $\mathcal{P}_2$ are polynomials with distinct roots, then the $\mu_i$'s are the eigenvalues of the Gaudin Hamiltonians. 

Let us now analyze (\ref{BetheForSU3}). As  noted above, for given $p$, we are interested in the spectrum for the representations in $\mathrm{Hilb}(p)\backslash\mathrm{Hilb}(p-1)$. For these states one has $\ell' = 0$ so that $\mathcal{P}_2\equiv 1$, and we arrive at the simpler equation (here $\mathcal{P}\equiv \mathcal{P}_1$)
\begin{align}
    \mathcal{P}'' - p\,&\mathcal{P}'\left(\frac{1}{x-\alpha_{23}}+\frac{1}{x-\alpha_{13}}+\frac{1}{x-\alpha_{12}}\right)+\nonumber\\
    -&\mathcal{P}\;\left(\frac{\mu_1}{x-\alpha_{23}}+\frac{\mu_2}{x-\alpha_{13}}+\frac{\mu_3}{x-\alpha_{12}}\right)=0\,,\label{HeunEq}
\end{align}
which is the well-known Heun equation -- a Fuchsian equation with four regular singularities on the Riemann sphere (see \cite[Chapter 31]{NIST} and \cite[Section $15.3$]{bateman1955higher} for general information on the Heun equation). Our goal is to find its polynomial solutions of degree $\ell$ with distinct roots. To this end it will be more convenient to work with the standard form of the Heun equation. We thus make the  substitution $x\mapsto \alpha_{23}-(\alpha_{23}-\alpha_{12})\,x$, so that the equation takes the form
\begin{align}
    &\mathcal{P}''-p\,\mathcal{P}'\left(\frac{1}{x}+\frac{1}{x-1}+\frac{1}{x-a}\right)+\mathcal{P}\frac{\alpha\beta\, x- q}{x(x-1)(x-a)} =0\,,\label{HeunFinalForm}\\
    &\text{where} \quad \alpha+\beta+1= -3p\,,\nonumber\\
    & q = \mu_1(\alpha_{13}-\alpha_{23})\,,\;\alpha\beta =\mu_3(\alpha_{23}-\alpha_{12})+\mu_2(\alpha_{23}-\alpha_{13})\,,\nonumber\\
    &a = \frac{\alpha_{23}-\alpha_{13}}{\alpha_{23}-\alpha_{12}}.\nonumber
\end{align}
In general, (\ref{HeunFinalForm}) has a polynomial solution of degree $\ell$ if and only if 
\bea
\alpha = -\ell
\eea
and $q$ is an eigenvalue of the  tridiagonal matrix (see Appendix \ref{polSolHeunEq} for details)
\begin{align}
    \mathcal{M}_{\ell}:=&\begin{pmatrix}
        0 & R_0 & 0 & \dots & 0\\
        P_1 &  -Q_1 & R_1 & \dots  & 0 \\\
        0 & P_2 & -Q_2 & & \vdots\\
        \vdots & \vdots & &\ddots & R_{\ell-1} \\
        0 & 0 & \dots & P_{\ell} & -Q_{\ell}
    \end{pmatrix}\,,\\
    \nonumber\\
   \text{where} \quad &P_j = (j-1+\alpha)(j-1+\beta)\,,\\
  &Q_j = j(j-2p-1)(1+a)\,,\\
   &R_j = a(j+1)(j-p)\,.
\end{align}
The eigenvector of $\mathcal{M}_{\ell}$ is then the vector of coefficients of the polynomial solution. 

There are $\ell+1$ independent polynomial solutions, each corresponding to one eigenvalue/eigenvector of $\mathcal{M}_{\ell}$. We are interested in solutions with distinct roots. In general they are hard to filter out but we note that the multiplicity for irreducible representations with\footnote{From Section \ref{compSpec}, we know that it is enough to study only the representations with $n\geq m$.} $n\geq m$ is $m+1 = \ell+1$. This means that every eigenvector of $\mathcal{M}_\ell$ must correspond to one of these representations, and therefore all corresponding polynomials must be admissible, i.e. with distinct roots. 

Now we are ready to describe the eigenvalues $\varepsilon$ of $\mathcal{H}$ corresponding to the representations with $\ell \in \mathbb{N}_0,\,\ell' = 0$ and $n \geq m$ for $p\in \mathbb{N}_0$. From the above, $\alpha = -\ell$ and $\beta = \ell-3p-1$. Thus, using $\mu_1+\mu_2+\mu_3 = 0$ and  (\ref{HeunFinalForm}), one can express the  $\mu_i$'s via~$q$. Then, upon substituting into (\ref{fullHam}), one arrives at  \begin{align}
    &\varepsilon=q(\alpha_{23}-\alpha_{12})+s_1 p(p+1)+\alpha_{23}(\ell^2-\ell(3p+1))\,,\nonumber\\
    &\mathrm{det}\left(\mathcal{M}_{\ell}-q\,\mathrm{Id}_{\ell+1}\right) = 0\,,\label{euationsOnEigenvalues}
\end{align}
where $\mathrm{Id}_{\ell+1}$ is an $(\ell+1)\times (\ell+1)$ identity matrix. We have also made use of the explicit expression for the eigenvalue $\mathcal{C}_2(3p-\ell,\ell,0)$ of the quadratic Casimir operator $\mathcal{C}_2$ on a given representation, which can be found using the Perelomov-Popov formula \cite{PerelomovPopov}
\begin{gather}\label{PPformula}
    \mathcal{C}_2(p_1,p_2,p_3)=\sum_{j=1}^3 v_j(v_j - 2j),\quad v_i = p_i - \frac{1}{3} \sum_{j=1}^3 p_j.
    \end{gather} 
We have chosen the additive constant in $\mathcal{H}$ so that the eigenvalue $\varepsilon$ for the $(3p,0)$ representation would coincide with the one obtained in~(\ref{3prepeigenvalue}).

The above polynomial equations may be solved for the eigenvalues $\varepsilon$ of the Hamiltonian $\mathcal{H}$ corresponding to fixed $p$ and $\ell\leq p$. 
Therefore, to obtain the full spectrum of the Laplace-Beltrami operator for the metric~(\ref{F3metric}) one should vary $p=0, 1, 2, \ldots$ and $\ell=0, 1, \ldots , p$. 

Let us summarize the discussion above into one proposition:
\begin{prop}
    Eigenvalues $\varepsilon$ of $-\triangle$ are given by  (\ref{euationsOnEigenvalues}) for $p \in \mathbb{N}_0$ and ${\ell \in \{0,1,\dots,p\}}$, each with  multiplicity\footnote{For $\ell\neq p$, this is the doubled dimension of the representation $(3p-2\ell,\ell)$  since  the eigenvalues on $(n,m)$ and $(m,n)$ representations are the same.} $(3p-2\ell+1)(\ell+1)(3p-\ell+2)$ for $\ell \neq p$ and $(p+1)^3$ for $\ell=p$. 
\end{prop}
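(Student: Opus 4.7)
The plan is to stitch together the ingredients already developed in Section~\ref{F3specsec}: Proposition~\ref{spinFlagConnection}, the filtration $\mathrm{Hilb}(p)\hookrightarrow\mathrm{Hilb}(p+1)$ from~(\ref{spinspinoscillators}), the multiplicity lemma preceding Figure~\ref{table1}, and the Heun-polynomial reduction~(\ref{HeunFinalForm})--(\ref{euationsOnEigenvalues}). The proposition is then essentially a repackaging of these, with multiplicities obtained by Weyl's dimension formula and accounting for the $(n,m)\leftrightarrow (m,n)$ symmetry of the real operator $-\triangle$.

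The first step is to reduce to the incremental content of each $\mathrm{Hilb}(p)$. By Proposition~\ref{spinFlagConnection}, $\mathrm{Spec}(-\triangle)=\lim_{p\to\infty}\mathrm{Spec}(p)$, and the embedding~(\ref{Hilbembed}) established at the oscillator level in~(\ref{spinspinoscillators}) shows that eigenvalues are preserved when passing from $\mathrm{Hilb}(p)$ to $\mathrm{Hilb}(p+1)$. Hence it suffices to collect, for every $p\in\mathbb{N}_0$, the eigenvalues arising on the new irreducibles, namely those $(n,m)$ with $n+2m=3p$. Setting $\ell=m$ gives $n=3p-2\ell$, and after restricting to the half $n\geq m$ (the other half contributes the same $\varepsilon$ via complex conjugation, since $-\triangle$ is real) one obtains $\ell\in\{0,1,\dots,p\}$, matching the range in the statement.

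The second step extracts the eigenvalues themselves. For these representations $\ell'=0$, so the $\mathrm{SU}(3)$ Bethe equations~(\ref{BetheEq}) collapse to the Heun equation~(\ref{HeunFinalForm}) with $\alpha=-\ell$. The Stieltjes-type result recalled in Appendix~\ref{polSolHeunEq} identifies polynomial solutions of degree $\ell$ with eigenvectors of the $(\ell+1)\times(\ell+1)$ tridiagonal matrix $\mathcal{M}_\ell$, producing $\ell+1$ admissible values of $q$ and hence, via the conversion~(\ref{fullHam})--(\ref{PPformula}), the $\ell+1$ eigenvalues $\varepsilon$ encoded in~(\ref{euationsOnEigenvalues}). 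Completeness --- that these truly exhaust the spectrum on the isotypic component of $(3p-2\ell,\ell)$ and that all Bethe polynomials have simple roots --- follows from a dimension count: the multiplicity lemma gives exactly $\min(n,m)+1=\ell+1$ copies of $(3p-2\ell,\ell)$ in $\mathrm{Hilb}(p)$, which forces a bijection between eigenvectors of $\mathcal{M}_\ell$ and copies of the representation, ruling out spurious or coinciding-root solutions.

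The final step is the multiplicity bookkeeping. Weyl's formula (equivalently~(\ref{PPformula})) gives $\dim(3p-2\ell,\ell)=\tfrac12(3p-2\ell+1)(\ell+1)(3p-\ell+2)$. When $\ell\neq p$ the dual $(\ell,3p-2\ell)$ is a distinct representation carrying the same eigenvalues with the same dimension, so the total multiplicity of each $\varepsilon$ in $\mathrm{L}^2(\mathcal{F}(3))$ is doubled to $(3p-2\ell+1)(\ell+1)(3p-\ell+2)$; when $\ell=p$ the representation $(p,p)$ is self-dual of dimension $(p+1)^3$, with no such doubling. The only genuinely nontrivial ingredient is the completeness claim in step two, and that is already taken care of by the matching of the Heun matrix size to the isotypic multiplicity; every other step is either a direct appeal to prior results or an elementary count.
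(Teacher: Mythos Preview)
Your proposal is correct and follows essentially the same path as the paper: the proposition is stated there as a summary of the preceding discussion, and you have accurately reassembled precisely those ingredients---the filtration of Hilbert spaces, the reduction to $\ell'=0$ states via~(\ref{nmpstates}), the Heun/Stieltjes identification of the $\ell+1$ eigenvalues of $\mathcal{M}_\ell$, the dimension-count argument for completeness, and the $(n,m)\leftrightarrow(m,n)$ doubling. The only addition on your side is making the Weyl dimension formula explicit (the paper leaves the multiplicity numbers unstated in the build-up and simply records them in the proposition), which is a welcome clarification rather than a departure.
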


\section{Spectral reconstruction} \label{spectrRecon}

 We will now describe another analytic method for extracting information about properties of the spectrum. Although this method cannot provide the complete spectrum, it can accurately determine some properties. In particular, we will use it to test the general solution obtained by means of Bethe ansatz in the previous section. 
 
 The key idea is to utilize the $\mathrm{S}_3$-invariance and homogeneity of the spin system under consideration, as well as the known spectrum for a particular choice of $\alpha_{ij}$. We will refer to the method as spectral reconstruction.

Let us take a closer look at the spin system. One can define the action of $\mathrm{S}_3$ in our system via permutation of the sites, since there is the same representation at each site. The symmetry acts on the spin operators via $\mathrm{S}^a_i \mapsto \mathrm{S}^a_{\sigma(i)}$ where $\sigma \in \mathrm{S}_3$. Effectively, it works as a  permutation of the $\alpha_{ij}$'s:
\begin{align}
    \mathcal{H} = \sum_{i<j}\alpha_{ij}\mathrm{S}^a_i\mathrm{S}^a_j\;\xrightarrow{\sigma\, \in\, \mathrm{S}_3} \; \sum_{i<j}\alpha_{\sigma^{-1}(i)\sigma^{-1}(j)}\mathrm{S}^a_i\mathrm{S}^a_j\,,
\end{align}
where we assume that $\alpha_{ij} = \alpha_{ji}$. Let us relabel the $\alpha_{ij}$'s as  $\alpha_{12}=\alpha^{(3)}$, $\alpha_{13}=\alpha^{(2)}$ and $\alpha_{23}=\alpha^{(1)}$. In these terms, the map $\alpha_{ij} \mapsto \alpha_{\sigma^{-1}(i)\sigma^{-1}(j)}$ reads as $\alpha^{(i)} \mapsto \alpha^{(\sigma^{-1}(i))}$. 
It follows that the spectrum of $\mathcal{H}$ should be invariant under the symmetry. Thus, eigenvalues are symmetric ($\mathrm{S}_3$-invariant) functions of $\alpha_{ij}$'s, which may therefore be expressed in terms of the elementary symmetric polynomials $s_1, s_2, s_3$ introduced in~(\ref{s1s2s3}).  

Next, one notes that under the map $\alpha_{ij}\mapsto \zeta\,\alpha_{ij}$ an eigenvalue $\lambda$ of $\mathcal{H}$ is scaled as~$\zeta\,\lambda$. Therefore, eigenvalues are homogeneous functions of $\alpha_{ij}$'s with degree~$1$. Finally, the operators $\mathrm{S}^a_i\mathrm{S}^a_j$ are simply matrices, which means that the characteristic polynomial of~$\mathcal{H}$ contains $\alpha_{ij}$'s in a polynomial way. 

To summarize the above, for an irreducible representation $\mathcal{R}\in \mathrm{Irr}\left(\mathrm{Hilb}(p)\right)$ with multiplicity $n$, the characteristic polynomial of the restricted Hamiltonian may be written as 
\begin{align}\label{genericChar}
    \lambda^n+\sum_{k=1}^{n} \lambda^{n-k} \sum_{b_1 + 2 b_2 + 3 b_3 = k} C^k_{b_1\,b_2\,b_3}\;s_1^{b_1}s_2^{b_2}s_3^{b_3} = 0\,,
\end{align}
where we have taken into account both homogeneity and $\mathrm{S}_3$-invariance. Each root of the polynomial appears exactly $\mathrm{dim}\left(\mathcal{R}\right)$ times in $\mathrm{Spec}(p)$. The problem of finding the spectrum is thus reduced to determining the coefficients $C^k_{b_1\,b_2\,b_3}$.

In order to find $C^k_{b_1\,b_2\,b_3}$, we will use the additional information coming from the fact that the spectrum is known when two of $\alpha_{ij}$'s are equal. Suppose  $\alpha_{13} = \alpha_{23}$. Then, up to an additive constant, $\mathcal{H}$ is a sum of two commuting Casimir operators, i.e. 
\begin{align}
    \mathcal{H} = \frac{\alpha_{12}-\alpha_{13}}{2}\left(\mathrm{S}_1^a+\mathrm{S}_2^a\right)^2 + \frac{\alpha_{13}}{2}\left(\mathrm{S}_1^a+\mathrm{S}_2^a+\mathrm{S}_3^a\right)^2\,.\label{specialHam}
\end{align}
There is the well-known formula~(\ref{PPformula}) for the eigenvalues of Casimir operators. Using it, one arrives at the following proposition:
\begin{prop}[\cite{Bykov_2024}]
    For a fixed value of $p$, the eigenvalues of $\mathcal{H}$ (\ref{specialHam}) on the irreducible representation $\mathcal{R}\in \mathrm{Irr}\left(\mathrm{Hilb}(p)\right)$, with Young diagram having row lengths $\left(p_1^B, p_2^B, p_3^B\right)$, take the form
    \begin{gather}\label{F3specSp}
        \Lambda_k = \frac{\alpha_{13}}{2}\mathcal{C}_2(p^B_1,p^B_2,p^B_3)+
        \frac{\alpha_{12} - \alpha_{13}}{2}
        \left[
            \mathcal{C}_2\left(p_1^A, p^A_2, 0\right)-\mathcal{C}_2(p,p,0)
        \right], 
    \end{gather}
    where $p_1^A+p_2^A=2p,\, p_2^A\leq p\leq p_1^A$, $p_1^B+p_2^B+p_3^B=3p,\, p_2^B\leq p_1^A\leq p_1^B,\, p_3^B\leq p_2^A \leq p_2^B$. Each $\Lambda_k$ has a multiplicity of $\mathrm{dim}(\mathcal{R})$. 
\end{prop}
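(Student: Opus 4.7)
The strategy is to observe that at $\alpha_{13}=\alpha_{23}$ the Hamiltonian becomes, up to an additive constant, a linear combination of the two commuting quadratic Casimirs $C_{12}:=(\mathrm{S}_1+\mathrm{S}_2)^2$ and $C_{123}:=(\mathrm{S}_1+\mathrm{S}_2+\mathrm{S}_3)^2$, and then to diagonalize them simultaneously using a two-step branching basis for $\mathrm{Hilb}(p)=\mathrm{Sym}(p)^{\otimes 3}$.

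First I would verify~(\ref{specialHam}) by applying the identity $\mathrm{S}_i^a\mathrm{S}_j^a = \tfrac{1}{2}\bigl((\mathrm{S}_i+\mathrm{S}_j)^2-\mathrm{S}_i^2-\mathrm{S}_j^2\bigr)$ twice (to the pair $(1,2)$ and then to the pair of composites $(1+2,3)$) and absorbing the single-site scalars $\mathrm{S}_i^2$ into the overall additive constant of~(\ref{spinHam}). The resulting commutativity $[C_{12},C_{123}]=0$ is then manifest.

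Second, I would diagonalize $C_{12}$ and $C_{123}$ using a two-step branching. The first step is the classical Clebsch-Gordan decomposition $\mathrm{Sym}(p)\otimes\mathrm{Sym}(p)=\bigoplus_{k=0}^{p}\mathcal{R}^A_k$, where $\mathcal{R}^A_k$ has row lengths $(2p-k,k,0)$; setting $p_1^A=2p-k,\,p_2^A=k$ reproduces the bounds $p_1^A+p_2^A=2p$ and $p_2^A\leq p\leq p_1^A$. The subsequent decomposition $\mathcal{R}^A\otimes\mathrm{Sym}(p)=\bigoplus_{\mathcal{R}^B}\mathcal{R}^B$ is dictated by the $\mathrm{SU}(3)$ Pieri rule for multiplication by a symmetric power: the summands $\mathcal{R}^B=(p_1^B,p_2^B,p_3^B)$ are precisely those Young diagrams obtained by adjoining a horizontal strip of $p$ boxes to the diagram of $\mathcal{R}^A$, each appearing with multiplicity one. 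The horizontal-strip condition $\lambda_i\leq\mu_i\leq\lambda_{i-1}$ applied to $\lambda=(p_1^A,p_2^A,0)$ and $\mu=(p_1^B,p_2^B,p_3^B)$ translates directly into the inequalities $p_2^B\leq p_1^A\leq p_1^B$ and $p_3^B\leq p_2^A\leq p_2^B$, together with $p_1^B+p_2^B+p_3^B=3p$ -- exactly the remaining constraints of the proposition.

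Finally, on the simultaneous isotypic summand labelled by the tuple $(p_1^A,p_2^A;p_1^B,p_2^B,p_3^B)$ the two Casimirs act as the scalars $\mathcal{C}_2(p_1^A,p_2^A,0)$ and $\mathcal{C}_2(p_1^B,p_2^B,p_3^B)$, computed via~(\ref{PPformula}). Substituting into the rewritten Hamiltonian yields~(\ref{F3specSp}) up to an additive constant, which is fixed by imposing that the trivial representation sits at zero energy: the unique branching pattern producing it is $(p_1^A,p_2^A)=(p,p)$ and $(p_1^B,p_2^B,p_3^B)=(p,p,p)$, and since $\mathcal{C}_2(p,p,p)=0$ the matching zero forces the shift $-\tfrac{\alpha_{12}-\alpha_{13}}{2}\mathcal{C}_2(p,p,0)$ in~(\ref{F3specSp}). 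Each admissible tuple labels exactly one copy of $\mathcal{R}^B\subset\mathrm{Hilb}(p)$, giving the claimed multiplicity $\dim(\mathcal{R}^B)$. The only moderately non-routine ingredient is checking that the $\mathrm{SU}(3)$ Pieri rule produces precisely the stated four-inequality interlacing; the remainder is Casimir arithmetic and a single normalization check.
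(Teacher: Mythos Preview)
Your proposal is correct and follows exactly the approach the paper indicates: the paper merely writes $\mathcal{H}$ in the form~(\ref{specialHam}), invokes the Perelomov--Popov formula~(\ref{PPformula}), and cites the result from~\cite{Bykov_2024} without further detail. Your two-step branching argument (Clebsch--Gordan for $\mathrm{Sym}(p)\otimes\mathrm{Sym}(p)$ followed by the Pieri rule for $\mathcal{R}^A\otimes\mathrm{Sym}(p)$) is precisely the computation that justifies the stated constraints on $(p_1^A,p_2^A)$ and $(p_1^B,p_2^B,p_3^B)$, and your normalization via the trivial representation correctly fixes the additive constant.
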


For simplicity, we will assume $\alpha_{12}=1,\, \alpha_{13} = \alpha_{23} = q$. Then, for a particular irreducible representation $\mathcal{R}\in \mathrm{Irr}\left(\mathrm{Hilb}(p)\right)$ with multiplicity $n$ one forms its characteristic polynomial using (\ref{F3specSp})
\begin{align}
    \mathfrak{L}\left(\mathcal{R}|q\right)=\prod_{k=1}^{n} \left(\lambda - \Lambda_k(q)\right)\,.\label{specRecon}
\end{align}
From (\ref{F3specSp}) we know that $\Lambda_k~=~\chi_k\,+\,\nu_k\,q$. Thus, one finds the following decomposition for $\mathfrak{L}\left(\mathcal{R}|q\right)$:
\begin{align}
    &\mathfrak{L}\left(\mathcal{R}|q\right) = \lambda^n+\sum_{k=1}^n\left(-1\right)^k f_k(q)\, \lambda^{n-k}\,,\label{descrFromSpCase}\\
    &\text{where}\;\;f_k(q) := \sum_{i_1<i_2<\dots<i_k}\;\prod_{j=1}^k\Lambda_{i_j}(q)=\sum_{j=0}^{k}p_j\,q^j\,.\nonumber
\end{align}
In principle, the $p_j$'s are known. From (\ref{genericChar}) in the case when  $\alpha_{12} = 1$ and $\alpha_{13}=\alpha_{23}=q$ one finds that the same polynomial admits a different description:
\begin{align}
    &\mathfrak{L}\left(\mathcal{R}|q\right) = \lambda^n+\sum_{k=1}^{n} \lambda^{n-k}  g_k(q)\,,\label{descrGeneral}\\
    &\text{where}\;\;g_k(q):=\sum_{b_1 + 2 b_2 + 3 b_3 = k} C^k_{b_1\,b_2\,b_3}\;\left(1+2q\right)^{b_1}\left(2q+q^2\right)^{b_2}\left(q^2\right)^{b_3}\,.\nonumber
\end{align}
Comparing (\ref{descrFromSpCase}) and (\ref{descrGeneral}), one obtains equations on the polynomials: $g_k(q) = (-1)^k f_k(q)$, where $k=1, \dots, n$. The coefficients  must be equal, which provides linear constraints on the $C^k_{b_1 b_2 b_3}$'s, since  the coefficients of $f_k(q)$ are known. 

Next we would like to understand when one can completely determine the $C^k_{b_1 b_2 b_3}$'s. We start by noting that the number of constraints provided by $f_k(q)$ is $k+1$. At the same time, the number of $C^k_{b_1 b_2 b_3}$'s depends non-linearly on $k$. One shows that it is less than $k+1$ only for\footnote{At $k=7$ the number of $C^k_{b_1b_2b_3}$ is $8$, whereas for $k=8$ it is $10$.} $k<8$. Thus, for $k \geq 8$ it is not possible to determine all of the $C^k_{b_1b_2b_3}$'s. For $k = 6$ and $k = 7$, the number of $C^k_{b_1b_2b_3}$'s is exactly $k + 1$, but the system of linear equations is degenerate. As a result, it is as well impossible to find all the $C^k_{b_1b_2b_3}$'s. For $k \leq 5$, the number of $C^k_{b_1b_2b_3}$'s is $k$, so in this case one has a rectangular system of linear equations on the $C^k_{b_1b_2b_3}$'s. One easily finds that the rank of this system is maximal, so that one can find $C^k_{b_1b_2b_3}$ explicitly\footnote{The  fact that the system of equations is rectangular means that there is a constraint on the $p_j$'s. This constraint should be interpreted as a condition on the values of the Casimir operator (see (\ref{F3specSp})).}.

To summarize the above, we formulate the following proposition: 

\begin{prop}
    For $k \leq 5$, $C^k_{b_1 b_2 b_3}$'s can be reconstructed from the known spectrum of the spin chain in the case  $\alpha_{13} = \alpha_{23}$. 
\end{prop}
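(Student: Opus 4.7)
Equating (\ref{descrFromSpCase}) with (\ref{descrGeneral}) and matching the coefficient of $\lambda^{n-k}$ yields, for each $k = 1, \ldots, n$, the polynomial identity $g_k(q) = (-1)^k f_k(q)$ in $\mathbb{R}[q]$, whose right-hand side is completely determined by (\ref{F3specSp}). The resulting linear system in the unknowns $C^k_{b_1 b_2 b_3}$ is automatically consistent, since the true coefficients come from an actual characteristic polynomial and hence satisfy the identity by construction. Therefore the proposition reduces to proving \emph{uniqueness}, i.e.\ the injectivity of the specialization map
\[
\Phi \colon R_k \longrightarrow \mathbb{R}[q]_{\leq k}\,,\qquad P(s_1, s_2, s_3) \longmapsto P\bigl|_{\alpha_{12}=1,\; \alpha_{13}=\alpha_{23}=q}\,,
\]
where $R_k$ denotes the weighted-degree-$k$ part of $\mathbb{R}[s_1, s_2, s_3]$ with $\mathrm{wt}(s_i) = i$, so that $\dim R_k = N_k$.

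For injectivity I would argue geometrically. Suppose $P \in \ker\Phi$. By weighted homogeneity of $P$, vanishing on the curve $\{(1, q, q)\}$ upgrades to vanishing on the full 2-dimensional cone $\{\alpha_{13} = \alpha_{23}\}$ in $(\alpha_{12}, \alpha_{13}, \alpha_{23})$-space. Since $P$ is symmetric in the $\alpha_{ij}$'s (being expressed in the $s_i$'s), the $\mathrm{S}_3$-symmetry forces $P$ to vanish on the two other diagonal cones as well, and hence on their union, the discriminant locus $\{\Delta = 0\}$, where
\[
\Delta = s_1^2 s_2^2 - 4 s_2^3 - 4 s_1^3 s_3 + 18 s_1 s_2 s_3 - 27 s_3^2\,.
\]
Irreducibility of $\Delta$ in $\mathbb{R}[s_1, s_2, s_3]$ then yields $\Delta \mid P$. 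A direct check shows $\Delta$ has weighted degree $6$; therefore for $k \leq 5$ the intersection $(\Delta) \cap R_k$ is zero, forcing $P = 0$. Thus $\Phi$ is injective on $R_k$ and the coefficients $C^k_{b_1 b_2 b_3}$ are uniquely recovered from $f_k$.

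The main technical step is the identification $\ker\Phi = (\Delta) \cap R_k$, which combines the irreducibility of the discriminant with the promotion of vanishing along a curve to vanishing on a symmetric hypersurface via homogeneity and $\mathrm{S}_3$-action. Once this is in place, the weighted-degree inequality $\mathrm{wt}(\Delta) = 6 > k$ closes the argument. This picture transparently explains the boundary behavior already noted in the paper: for $k = 6, 7$ the elements $\Delta$ and $s_1\,\Delta$ contribute a one-dimensional kernel, producing the observed degeneracy of the linear system, whereas for $k \geq 8$ the numerical inequality $N_k > k + 1$ creates an additional, purely dimensional obstruction on top of the discriminant one.
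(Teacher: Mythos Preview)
Your argument is correct and genuinely different from the paper's. The paper proceeds by direct counting: it observes that $g_k(q)=(-1)^k f_k(q)$ supplies $k+1$ linear equations, that the number $N_k$ of monomials $s_1^{b_1}s_2^{b_2}s_3^{b_3}$ with $b_1+2b_2+3b_3=k$ equals $k$ for $k\leq 5$, and then asserts (essentially as a computational check) that the resulting rectangular linear system has full rank. No structural reason is given for the rank condition.

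You instead prove injectivity of the specialization $\Phi$ conceptually, by showing that $\ker\Phi=(\Delta)\cap R_k$ with $\Delta$ the discriminant of weighted degree~$6$, so the kernel vanishes for $k\leq 5$. This buys an explanation rather than a verification: the same computation immediately yields the one-dimensional kernels $\langle\Delta\rangle$ and $\langle s_1\Delta\rangle$ at $k=6,7$ that the paper records as an empirical degeneracy, and separates the discriminant obstruction from the purely dimensional one ($N_k>k+1$) that enters at $k\geq 8$. One small refinement: your appeal to irreducibility of $\Delta$ in $\mathbb{R}[s_1,s_2,s_3]$ is valid but slightly delicate over~$\mathbb{R}$; the cleanest route is to pass to $\mathbb{R}[\alpha_{12},\alpha_{13},\alpha_{23}]$, where vanishing on $\{\alpha_{13}=\alpha_{23}\}$ gives $(\alpha_{13}-\alpha_{23})\mid P$, symmetry forces all three linear factors to divide, hence the Vandermonde $V$ divides $P$, and then symmetry of $P$ together with the alternation of $V$ forces $V^2=\Delta\mid P$. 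This avoids any Nullstellensatz-type step.
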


As a result, the characteristic polynomial can be fully reconstructed for those irreducible representations $\mathcal{R}\in \mathrm{Irr}\left(\mathrm{Hilb}(p)\right)$ which have a multiplicity of at most $5$. Consequently, for $p\leq 4$, the spectrum of the spin chain can be completely reconstructed (we write out the resulting characteristic polynomials in Appendix \ref{p4specRecon}).

\begin{figure}[H]
    \centering
    \includegraphics[width=0.5\textwidth]{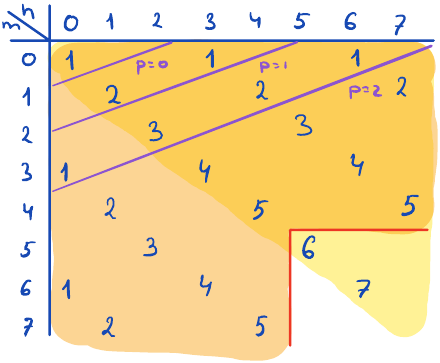}
    \caption{Representations for which the spectrum can be explicitly determined using spectral reconstruction (the hook between the blue axes and the red lines).}
    \label{table2}
\end{figure}

Let us consider a simple example of irreducible representation entering with multiplicity $1$. From Table \ref{table1}, we learn that these are representations of type  $(3k,0)$ and $(0,3k)$ for $k\in \mathbb{N}_{0}$. One shows that $f_1(q)= (1+2q)(k^2+k)$ in both cases and, thus, the characteristic polynomial for these representations is
\begin{align}
    \lambda -(k^2+k)s_1=0 \,,
\end{align}
which reproduces the result~(\ref{3prepeigenvalue}).

As a simple test, one checks that the polynomials obtained using spectral reconstruction for $p\leq 4$ are the same as those obtained using the Bethe ansatz.

\section{Conclusion and outlook}

In this paper, we have explicitly described the geodesics and the spectrum of the Laplace-Beltrami operator on $\mathcal{F}(3)$, equipped with an arbitrary invariant metric, by using the correspondence between one-dimensional sigma models on flag manifolds and $\mathrm{SU}(N)$ spin chains. In Section \ref{SUNSpinChain}, we pointed out that our methods are generally applicable to all full flag manifolds equipped with a metric from a specific class (i.e. such that $\alpha_{ij} = \frac{t_i - t_j}{z_i-z_j}$ in (\ref{FNmetric})). 

In fact, one can as well apply our methods to the case of partial flag manifolds. Consider the partial flag manifold $\mathcal{F}_{n_1,n_2\dots,n_k}$, where $n_1+\dots+n_k=N$. In this case there is a correspondence with an $\mathrm{SU}(N)$ spin chain (see Propositions \ref{subbundle} and \ref{spinFlagConnection}), where, in terms of Young diagrams, the Hilbert space  is~\cite{Bykov_2024}
\begin{align}
    n_1 \quad
        \underbrace{\!\!\!\!\!
        \begin{cases}
            \begin{ytableau}
            ~ & \dots & ~ \\
            \vdots & \vdots & \vdots\\
            ~ &  \dots & ~ \\
            \end{ytableau}
        \end{cases} \!\!\!\!\!\!}_{p} \quad \otimes\quad n_2 \quad
        \underbrace{\!\!\!\!\!
        \begin{cases}
            \begin{ytableau}
            ~ & \dots & ~ \\
            \vdots & \vdots & \vdots\\
            ~ &  \dots & ~ \\
            \end{ytableau}
        \end{cases} \!\!\!\!\!\!}_{p}\quad \otimes \quad\dots \quad \otimes \quad n_k \quad
        \underbrace{\!\!\!\!\!
        \begin{cases}
            \begin{ytableau}
            ~ & \dots & ~ \\
            \vdots & \vdots & \vdots\\
            ~ &  \dots & ~ \\
            \end{ytableau}
        \end{cases} \!\!\!\!\!\!}_{p}\,.\label{HilbPartialFlag}
\end{align}
The Hamiltonian of the system is defined in~(\ref{spinHamN}). As we noted in Section \ref{SUNSpinChain}, such spin system with three sites is Gaudin and, therefore, integrable. We thus conclude that the one-dimensional sigma model on $\mathcal{F}_{n_1,n_2,n_3}$ is also integrable, both at the classical and quantum level. Let us point out that, from a general perspective, the case of partial flag manifolds can be treated as a special case of full flag manifolds, where some of the $\alpha_{ij}$'s tend to infinity (see \cite{Bykov_2024} for details).

Another generalization is related to magnetic geodesics and the magnetic Laplacian (see \cite{Bolsinov2006,Kordyukov_2019} for the background). To this end one should consider spin chains with varying widths of Young diagrams  at different sites (see (\ref{Hilbp}) and (\ref{HilbPartialFlag})). For example, in the case of $\mathcal{F}(N)$ we are interested in 
$\otimes_{i=1}^N \mathrm{Sym}(p_i)$, where $p_i = p+q_i$. In the limit $p\to \infty$ the spin chain reproduces a one-dimensional sigma model on a flag manifold with a magnetic field, where $q_i$'s are the magnetic charges \cite{IsotropicBykov}. We expect that the methods described in the present paper are also valid in the magnetic case.


\vspace{0.5cm}
\textbf{Acknowledgments.} Sections 1-4 were supported by the Russian Science Foundation grant № 25-72-10177 (\href{https://rscf.ru/en/project/25-72-10177/}{\emph{https://rscf.ru/en/project/25-72-10177/}}). Sections 5-7 were supported by the Moscow Center of Fundamental and Applied Mathematics of Lomonosov Moscow State University under agreement № 075-15-2025-345. We would like to thank  I.~Taimanov and I.~Tolstukhin for valuable discussions.

\vspace{2cm}
\appendixbig

\setcounter{section}{10}
\newcounter{appcounter}
\setcounter{appcounter}{1}
\renewcommand{\thesection}{\Alph{appcounter}}

\section{Weierstrass $\zeta, \sigma$ and $\wp$ elliptic functions}

Here, following \cite{Akhiezer_1990} and \cite{Pastras_2020}, we recall the definitions of the elliptic functions that we use throughout the paper.

Let us fix two complex numbers $\omega_1,\omega_2 \in \mathbb{C}$ such that\footnote{In our applications, we will assume that $\omega_1 \in \mathbb{R}$ and $\omega_2 \in i\mathbb{R}$.} $\mathrm{Im}\left(\omega_2/\omega_1\right) > 0 $. Then, one defines the Weierstrass $\wp$-function using the series
\begin{align}
    \wp(z) = \frac{1}{z^2} + \sum_{\{m,n\}\neq\{0,0\}}\left(\frac{1}{\left(z+2m\omega_1+2n\omega_2\right)^2} - \frac{1}{\left(2m\omega_1+2n\omega_2\right)^2}\right)\,.
\end{align}
This is a doubly periodic function $\wp(z+2\omega_1)=\wp(z+2\omega_2)=\wp(z)$ with second-order poles at $2m\omega_1+2n\omega_2$, where $m,n \in \mathbb{Z}$. The Weierstrass $\wp$-function satisfies the equation
\begin{align}
   &\left(\frac{\mathrm{d}\;\,}{\mathrm{d}z}\wp\right)^2 = 4 \wp^3-g_2\,\wp - g_3\,,\\
   &\text{where}\;\; g_2 = 60 \sum_{\{m,n\}\neq\{0,0\}} \frac{1}{\left(2m\omega_1+2n\omega_2\right)^4}\,,\nonumber\\
   &g_3 = 140 \sum_{\{m,n\}\neq\{0,0\}} \frac{1}{\left(2m\omega_1+2n\omega_2\right)^6}\,.\nonumber
\end{align}
Now, we can define the Weierstrass $\zeta$- and $\sigma$-functions as follows:
\begin{align}
    &\zeta(z) = \frac{1}{z} - \int_{0}^z\left(\wp(u)-\frac{1}{u^2}\right)\mathrm{d}u\,,\\
    &\ln\left(\frac{\sigma(z)}{z}\right) = \int^{z}_{0}\left(\zeta(u)-\frac{1}{u}\right)\mathrm{d}u\,,
\end{align}
where the integration contour should not pass through $2m\omega_1+2n\omega_2$ except for the case $m=n=0$. These are quasi-periodic functions:
\begin{align}
    &\zeta(z+2\omega_i) = \zeta(z)+2\zeta(\omega_i)\,,\\
    &\sigma(z+2\omega_i) = -e^{2\zeta(\omega_i)(z+\omega_i)}\sigma(z)\,,\;\text{for}\;i=1,2\,.
\end{align}

\stepcounter{appcounter}
\section{On the reality of roots}\label{realRoots}
Here we will show that all  roots of the polynomial $4\mathrm{Y}^3-g_2\mathrm{Y}-g_3$ are real, if $g_2$ and $g_3$ are defined as in (\ref{WeierstrassY}). This step is crucial for understanding the properties of the Weierstrass $\wp$-function, which is a solution to (\ref {WeierstrassY}).

First we return to the variable $y = \mathrm{Y} + \frac{1}{3}\left(\mathrm{I}_2+\mathrm{I}_3\right)$, so that we are led to the following system of equations (see (\ref{origEqOny})):
\begin{align}
    &4y\left(y-\mathrm{I}_2\right)\left(y-\mathrm{I}_3\right)=-\mathrm{I}_4\,,\label{yeqApp}\\
    &\text{where}\quad \mathrm{I}_2 =|c|^2+\frac{\alpha_{12}-\alpha_{13}}{\alpha_{23} - \alpha_{13}}\,|a|^2\,,\nonumber\\
    &\mathrm{I}_3 =|c|^2+\frac{\alpha_{12}-\alpha_{13}}{\alpha_{12} - \alpha_{23}}\,|b|^2\,,\nonumber\\-&\mathrm{I}_4 = \frac{\left(\alpha_{12} - \alpha_{13}\right)^2}{\left(\alpha_{12} - \alpha_{23}\right)\left(\alpha_{23} - \alpha_{13}\right)}\,\left(\Bar{a}b\Bar{c} + a\Bar{b}c\right)^2\,.\nonumber
\end{align}
It is convenient to forget for a moment about the meaning of $a,b,c$ and treat them as arbitrary complex numbers. We note that $\mathrm{I}_2 \geq |c|^2$, $\mathrm{I}_3 \geq |c|^2$ and $-\mathrm{I}_4 \geq 0$ because $\alpha_{12} > \alpha_{23} > \alpha_{13}$. Note that if the equation 
\begin{align}
    &4y\left(y-\mathrm{I}_2\right)\left(y-\mathrm{I}_3\right)=\mathrm{J}\,,\label{Deq}\\
    &\text{where}\quad \mathrm{J} \geq -\mathrm{I}_4 \geq 0\,,
\end{align}
has only real roots, then the original equation (\ref{yeqApp}) will also have only real roots (see~Fig.~\ref{roots}). 

\begin{figure}[h]
    \centering
    \includegraphics[width=0.35\textwidth]{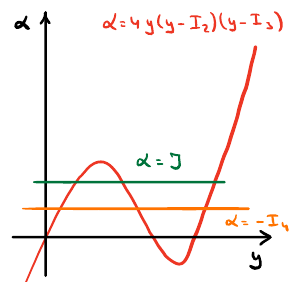}
    \caption{Roots of the polynomials.}
    \label{roots}
\end{figure}

We choose 
\begin{align}
    \mathrm{J} = \frac{\left(\alpha_{12} - \alpha_{13}\right)^2}{\left(\alpha_{12} - \alpha_{23}\right)\left(\alpha_{23} - \alpha_{13}\right)}\,4|a|^2|b|^2|c|^2\,,
\end{align}
It is easy to see that $y=|c|^2$ solves (\ref{Deq}), but $|c|^2 \leq \mathrm{I}_2$ and $|c|^2 \leq \mathrm{I}_3$, therefore equation~(\ref{Deq}) has three real roots. We also note that all three roots are non-negative.

Let us now study the degenerate case when two\footnote{One easily sees that the case of three equal roots corresponds to $|a|=|b|=|c|=0$.} of the roots of equation (\ref{yeqApp}) are equal. This happens when $\mathrm{J}=-\mathrm{I}_4$, and the maximum of the function $4y(y-\mathrm{I}_2)(y-\mathrm{I}_3)$ is reached at $y=|c|^2$. We illustrate this situation using the following figure:
\begin{figure}[H]
    \centering
    \includegraphics[width=0.37\textwidth]{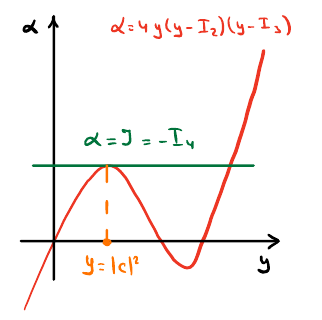}
    \caption{Degenerate case.}
    \label{degCase}
\end{figure}

One easily expresses the above conditions in the following algebraic way:
\begin{align}
    &\left(\alpha_{12}-\alpha_{13}\right)|a|^2|b|^2+\left(\alpha_{23}-\alpha_{12}\right)|a|^2|c|^2 + \left(\alpha_{13}-\alpha_{23}\right)|b|^2|c|^2=0\,,\\
    &\Bar{a}b\Bar{c}=a\Bar{b}c\,.\nonumber
\end{align}

\stepcounter{appcounter}
\section{On the boundedness of solutions}\label{boundSolution}
We shall verify that the solutions (\ref{ABCsolutions}) are bounded, as they should be in order to describe geodesic flow. Note that this is not obvious at first sight,  since, as we noted in Section~\ref{geodSection}, $|a|^2,|b|^2$ and $|c|^2$ are  expressed through the Weierstrass $\wp$-function, which has poles. Concretely, we shall check that $y_{\pm}(\tau|\mu)$ are bounded as functions of $\tau$.

Due to the quasiperiodicity of $\sigma$, i.e. $\sigma(z+2\,\omega_1) = - e^{2\zeta(\omega_1)(z+\omega_1)}\sigma(z)$, the function $y_{\pm}(\tau|\mu)$ is also quasiperiodic:
\begin{align}
    y_{\pm}(\tau+2\,\omega_1|\mu) = e^{\mp 2 \left(\omega_1 \zeta(\mu) - \mu \zeta(\omega_1)\right)}\,y_{\pm}(\tau|\mu)\,.
\end{align}
Thus, boundedness depends on the properties of $\phi(\mu):=\omega_1\zeta(\mu) - \mu \zeta(\omega_1)$: 
$y_{\pm}(\tau|\mu)$ is bounded\footnote{In fact, the solution can be interpreted as a Bloch wave \cite{grosso2013solid} for a periodic potential generated by the  $\wp$-function.} if and only if $\phi(\mu)$ is pure imaginary. It turns out that $\phi(\mu)$ is imaginary when $\lambda = \wp(\mu) \in (-\infty,e_3]\cup[e_2,e_1]$, where $e_1 > e_2 > e_3$ are the roots of ${4\mathrm{Y}^3 - g_2 \mathrm{Y}-g_3}$~\cite{Pastras_2020}. In our case $\lambda_1 = 2\mathrm{I}_2/3-\mathrm{I}_3/3$, $\lambda_2 = -\mathrm{I}_2/3+2\mathrm{I}_3/3$ and $\lambda_3 = -\mathrm{I}_2/3-\mathrm{I}_3/3$. 

It is simpler to work with the roots $e_i'$'s of the polynomial in $y = \mathrm{Y} + (\mathrm{I}_2 + \mathrm{I}_3) / 3$, which is given by $4y(y - \mathrm{I}_2)(y - \mathrm{I}_3) + \mathrm{I}_4$ (see (\ref{yeqApp})). We shall check that $\lambda'_i=\lambda_i+(\mathrm{I}_2 + \mathrm{I}_3) / 3 \in (-\infty,e'_3]\cup[e'_2,e'_1]$. Clearly, this is true because $\lambda'_1 = \mathrm{I}_2, \lambda'_2=\mathrm{I}_3$ and $\lambda'_3 = 0$ and, from Fig.~\ref{roots}, $e'_3 \geq 0$ and $[\mathrm{I}_2,\mathrm{I}_3]\subset[e'_2,e'_1] $.

\stepcounter{appcounter}
\section{Spectrum for $p = 4$ from spectral reconstruction}\label{p4specRecon}
In this Appendix we write out the characteristic polynomials for the irreducible representations in $\mathrm{Hilb}(4)$ obtained via spectral reconstruction. We will restrict to polynomials for the representations $(n, m)$ with $n \geq m$, because the polynomials for $(n, m)$ and $(m, n)$ are the same: 
\begin{align*}
    &\textrm{Mult.} &&&\textrm{Rep.}  &&&& \textrm{Characteristic polynomial}\\
    \hline
    &1 &&& (0,0) &&&& \lambda= 0 \\
    & &&& (3,0) &&&& \lambda - 2s_1=0 \\
    & &&& (6,0) &&&& \lambda - 6s_1=0 \\
    & &&& (9,0) &&&& \lambda - 12s_1=0 \\
    & &&& (12,0) &&&& \lambda - 20s_1=0 \\
    \hline
    & 2 &&& (1,1) &&&& \lambda^2 - 2s_1\lambda+3s_2 = 0 \\
    & &&& (4,1) &&&& \lambda^2 - 8s_1\lambda+12\left(s_1^2+ s_2\right) = 0\\
    & &&& (7,1) &&&& \lambda^2 - 18s_1\lambda+\left(72s_1^2+ 27s_2\right) = 0\\
    & &&& (10,1) &&&& \lambda^2 - 32s_1\lambda+\left(240s_1^2+ 48s_2\right) = 0\\
    \hline
    &3 &&& (2,2) &&&& \lambda^3 - 8s_1\lambda^2+\left(12s_1^2+ 28s_2\right)\lambda - \left(48s_1 s_2+80s_3\right) = 0 \\
    & &&& (5,2) &&&& \lambda^3 - 20 s_1\lambda^2+\left(108s_1^2+ 76s_2\right)\lambda -\left(144s_1^3 + 432s_1 s_2+224s_3\right) = 0 \\
    & &&& (8,2) &&&& \lambda^3 - 38 s_1\lambda^2+\left(432s_1^2+ 148s_2\right)\lambda - \left(1440s_1^3 + 1728s_1 s_2+440s_3\right) = 0\\
    \hline
    &4 &&& (3,3) &&&& \lambda^4 - 20 s_1\lambda^3+\left(108s_1^2+ 126s_2\right)\lambda^2 - \left(144s_1^3 + 972s_1 s_2+864s_3\right)\lambda +\\
    & &&& &&&& +\left(1080s_1^2s_2+945s_2^2+4320 s_1 s_3\right)= 0\\
    & &&& (6,3) &&&& \lambda^4 - 40 s_1\lambda^3+\left(508s_1^2+ 276s_2\right)\lambda^2 - \left(2304s_1^3 + 4872s_1 s_2+1944s_3\right)\lambda +\\
    & &&& &&&& +\left(2880s_1^4+17280s_1^2s_2+5760s_2^2+19440 s_1 s_3\right)= 0\\
    \hline
    &5 &&& (4,4) &&&& \lambda^5 - 40s_1 \lambda^4+(508s_1^2+396s_2)\lambda^3-(2304 s_1^3+7920 s_1 s_2 + 4752 s_3)\lambda^2+\\
    & &&& &&&& +(2880 s_1^4+ 38016 s_1^2 s_2 + 19008 s_2^2 + 76032 s_1s_3)\lambda-\\
    & &&& &&&& - (34560 s_1^3 s_2 + 241920 s_1^2 s_3 + 186624 s_2 s_3 +89856 s_1 s_2^2) = 0\\
    \hline
\end{align*}
Here `Mult.' stands for multiplicity and `Rep.' for~representation.

\stepcounter{appcounter}
\section{Polynomial solutions of Heun's equation}\label{polSolHeunEq}

In this Appendix we will derive conditions under which the Heun equation (\ref{HeunFinalForm}) has polynomial solutions. Let us search for solutions in the form
\begin{align}
    \mathcal{P}(x):=\sum_{j=0}^{\infty}c_j\, x^j\,,
\end{align}
i.e. requiring that $\mathcal{P}(x)$ be analytic at $x=0$. By direct substitution one shows that the coefficients $c_j$ satisfy the recurrence relation \cite[Chapter 31]{NIST}
\begin{align}
    &R_j\,c_{j+1}-(Q_j+q)\,c_j + P_j\, c_{j-1} = 0,\;\;\text{for}\;\;j\in \mathbb{N}\label{recurrenceJ}\\
    & -q\,c_0 = ap\,c_1\,.\label{recurrence0}
\end{align}
To obtain a polynomial solution of degree $\ell$ one terminates the series at $c_{\ell}$ by requiring that $c_{j} = 0$ for $j > \ell$. From (\ref{recurrenceJ}) evaluated at $j = \ell+1$ it follows that $P_{\ell+1} = 0$. Thus, $\alpha = -\ell$ or $\beta = -\ell$. Since the equations are symmetric under the permutation of $\alpha$ and~$\beta$, we may choose $\alpha = -\ell$. For the vector $\vec{c}^{\;\,\mathrm{T}}:=\begin{pmatrix}
    c_0 & c_1 & \dots & c_{\ell}
\end{pmatrix}$ one therefore has the equation 
\begin{align}
    \mathcal{M}_{\ell}\,\vec{c} = q\,\vec{c}\,,
\end{align}
which means that $q$ is an eigenvalue of the matrix $\mathcal{M}_{\ell}$.





\vspace{1cm}    
    \setstretch{0.8}
    \setlength\bibitemsep{5pt}
    \printbibliography

\end{document}